\newtheorem{definition}{Definition}
\newtheorem{lemma}{Lemma}
\newtheorem{theorem}{Theorem}
\newtheorem{remark}{Remark}
\DeclareMathOperator{\atan2}{atan2}
\begin{document}

\clubpenalty = 10000
\widowpenalty = 10000
\displaywidowpenalty = 10000
\brokenpenalty = 10000
\finalhyphendemerits = 10000

\begin{frontmatter}

\title{Recursive Bayesian Filtering\\ in Circular State Spaces}

\author[isas]{Gerhard~Kurz}
\ead{gerhard.kurz@kit.edu}

\author[isas]{Igor~Gilitschenski}
\ead{gilitschenski@kit.edu}

\author[isas]{Uwe~D.~Hanebeck}
\ead{uwe.hanebeck@ieee.org}

\address[isas]{Intelligent Sensor-Actuator-Systems Laboratory (ISAS)\\
Institute for Anthropomatics and Robotics\\
Karlsruhe Institute of Technology (KIT), Germany\vspace{3mm}}

\begin{abstract}
For recursive circular filtering based on circular statistics, we introduce a general framework for estimation of a circular state based on different circular distributions, specifically the wrapped normal distribution and the von Mises distribution. We propose an estimation method for circular systems with nonlinear system and measurement functions. This is achieved by relying on efficient deterministic sampling techniques. Furthermore, we show how the calculations can be simplified in a variety of important special cases, such as systems with additive noise as well as identity system or measurement functions. We introduce several novel key components, particularly a distribution-free prediction algorithm, a new and superior formula for the multiplication of wrapped normal densities, and the ability to deal with non-additive system noise. All proposed methods are thoroughly evaluated and compared to several state-of-the-art solutions.
\end{abstract}

\end{frontmatter}

\section{Introduction}
\label{sec:introduction}
Estimation of circular quantities is an omnipresent issue, be it the  wind direction, the angle of a robotic revolute joint, the orientation of a turntable, or the direction a vehicle is facing. Circular estimation is not limited to applications involving angles, however, and can be applied to a variety of periodic phenomena. For example phase estimation is a common issue in signal processing, and tracking objects that periodically move along a certain trajectory is also of interest.

Standard approaches to circular estimation are typically based on estimation techniques designed for linear scenarios that are tweaked to deal with some of the issues arising in the presence of circular quantities. However, modifying linear methods cannot only be tedious and error-prone, but also yields suboptimal results, because certain assumptions of these methods are violated. For example, solutions based on Kalman filters \cite{kalman1960} or nonlinear versions thereof \cite{julier2004} fundamentally neglect the true topology of the underlying manifold and assume a Gaussian distribution, which is only defined on $\mathbb{R}^n$. In the linear case, the use of a Gaussian distribution is frequently justified by the central limit theorem. This justification no longer holds in a circular setting, as the Gaussian is not a limit distribution on the circle.

In order to properly deal with circular estimation problems, we rely on circular statistics \cite{jammalamadaka2001}, \cite{batschelet1981}, a subfield of statistics that deals with circular quantities. More broadly, the field of directional statistics \cite{mardia1999} considers a variety of manifolds, such as the circle, the hypersphere, or the torus. Unlike standard approaches that assume linear state spaces, methods based on circular statistics correctly use the proper manifold and rely on probability distributions defined on this manifold. Circular statistics has been applied in a variety of sciences, such as biology~\cite{batschelet1981}, bioinformatics~\cite{mardia2008}, meteorology~\cite{fisher1987}, and geosciences~\cite{mardia1981}.

There has been some work on filtering algorithms based on circular statistics by Azmani et al.~\cite{azmani2009}, which was further investigated by Stienne et al. \cite{stienne2013}. Their work is based on the von Mises distribution and allows for recursive filtering of systems with a circular state space. However, it is limited to the identity with additive noise as the system equation and the measurement equation. The filter from \cite{azmani2009} has been applied to phase estimation of GPS signals \cite{stienne2014}, \cite{stienne2012} as well as map matching \cite{mokhtari2013}. Markovic et al. have published a similar filter \cite{markovic2014icra} based on the von Mises-Fisher distribution, a generalization of the von Mises distribution to the hypersphere.

We have previously published a recursive filter based on the wrapped normal distribution allowing for a nonlinear system equation \cite{ACC13_Kurz}. The paper \cite{ACC14_Kurz} extends this approach to make a nonlinear measurement update possible. Both papers rely on a deterministic sampling scheme, based on the first circular moment. This kind of sampling is reminiscent of the well-known unscented Kalman filter (UKF) \cite{julier2004}. We have extended this sampling scheme to the first two circular moments in \cite{Fusion14_KurzGilitschenski}, so the proposed filters are, in a sense, circular versions of the UKF. 

The developed filters have been applied in the context of constrained tracking \cite{IPIN13_Kurz}, bearings-only sensor scheduling \cite{Fusion13_Gilitschenski}, as well as circular model predictive control \cite{ACC15_Kurz}. 

Furthermore, we proposed a recursive filter based on the circular Bingham distribution in~\cite{Fusion13_Kurz-Bingham}. The Bingham distribution is closely related to the von Mises distribution, but has a bimodal density, which makes it interesting for axial estimation problems (i.e., problems with $180^\circ$ symmetry). 

An overview of all of these filters and the considered distributions as well as system and measurement models is given in Table~\ref{table:circularfilters}.

\begin{table*}
	\fontsize{8pt}{1pt}\selectfont
	\centering
	\begin{tabular}{llllll}
	\toprule
	&  & \multicolumn{2}{c}{\bf system} & \multicolumn{2}{c}{\bf measurement} \\
	\cmidrule(r){3-4}
	\cmidrule(r){5-6}
	\bf publication & \bf distribution & \bf model & \bf noise & \bf model & \bf noise \\
	\midrule
	Azmani, Reboul, Choquel, Benjelloun \cite{azmani2009} & von Mises & identity & additive & identity & additive \\
	Markovic, Chaumette, Petrovic \cite{markovic2014icra} & von Mises-Fisher & identity & additive & identity & additive \\
	Kurz, Gilitschenski, Julier, Hanebeck \cite{Fusion13_Kurz-Bingham} & Bingham & identity & additive & identity & additive \\
	Kurz, Gilitschenski, Hanebeck \cite{ACC13_Kurz} & wrapped normal/von Mises & nonlinear & additive & identity & additive \\
	Kurz, Gilitschenski, Hanebeck \cite{ACC14_Kurz} & wrapped normal & nonlinear & additive & nonlinear & any \\
	this paper & wrapped normal/von Mises & nonlinear & any & nonlinear & any \\
	\bottomrule
	\end{tabular}
	\caption{Circular filters based on directional statistics.}
	\label{table:circularfilters}
\end{table*}

This paper summarizes and combines our results as well as extends the previous work by a number of additional contributions. First of all, we propose a general filtering framework that can be used in conjunction with a variety of system and measurement equations, different types of noise, and both the wrapped normal and the von Mises distributions. Our previous publications \cite{ACC13_Kurz}, \cite{ACC14_Kurz} as well as the work by Azmani et al. \cite{azmani2009} can be seen as special cases of the proposed framework. 

Furthermore, we introduce a new multiplication formula for wrapped normal distributions that outperforms the solution proposed in \cite{ACC13_Kurz}. We generalize the prediction step from \cite{ACC13_Kurz} to a purely moment-based solution that does not need to assume any kind of distribution. Compared to \cite{ACC14_Kurz}, we add the ability to deal with non-additive noise not only in the measurement update but also in the prediction step. Finally, we perform a thorough evaluation, where we compare the proposed techniques to several state-of-the-art approaches.

This paper is structured as follows. First, we formulate the problem in Sec.~\ref{sec:problemformulation}. Then, we introduce the necessary fundamentals from circular statistics in Sec.~\ref{sec:circularstatistics}. Based on these fundamentals, we propose deterministic sampling schemes in Sec.~\ref{sec:deterministic} and derive the operations on circular densities required for the circular filter in Sec.~\ref{sec:operations}. These results are used to introduce circular filtering algorithms in Sec.~\ref{sec:circularfiltering}. An evaluation of the proposed algorithms can be found in Sec.~\ref{sec:evaluation}. Finally, conclusions are given in Sec.~\ref{sec:conclusion}.

\section{Problem Formulation}
\label{sec:problemformulation}
In this section, we formulate the problems under consideration and summarize some standard approaches that have been used to address the issues associated with periodicity.

\subsection{Circular Filtering}
\label{sec:problemformulation:circularfiltering}
Circular filtering considers estimation problems on the unit circle, which is commonly parameterized as the set of complex numbers with unit length $\{ x \in \mathbb{C} : |x| = 1 \}$. To allow for a more convenient one-dimensional notation, we identify $\Sone$ with the half-open interval $[0,2 \pi) \subset \mathbb{R}$, while keeping the topology of the circle. Together with the operation
\begin{align*}
+: \Sone \times \Sone \to \Sone, \quad x + y := x+_\mathbb{R} y \mod 2 \pi, \quad
\end{align*} 
for all  $x,y \in [0,2 \pi)$ with standard addition $+_\mathbb{R}$ on $\mathbb{R}$, the circle $\Sone$ forms an Abelian group. Because $\Sone$ with the topology given above has a manifold structure, $(\Sone, +)$ is a Lie group.

We consider a system whose state $x_k$ at time step $k$ is a value on the unit circle $\Sone$. System and measurement models are assumed to be given. In this paper, we propose several methods to deal with different types of models. More complex models necessitate the use of more sophisticated algorithms and conversely, simpler models allow the use of computationally less expensive algorithms.

\subsubsection{System Model}
In this work, we consider a system model whose state evolves according to the general system equation
\begin{align*}
x_{k+1} = a_k(x_k,w_k)
\end{align*}
with (nonlinear) system function $a_k: \Sone \times W \to \Sone$ and noise $w_k \in W$ stemming from some noise space $W$. Note that $W$ is not necessarily $\Sone$, but may be an arbitrary set, for example the real-vector space $\mathbb{R}^n$, some manifold, or even a discrete set. An interesting and practically relevant special case is a (nonlinear) system with additive noise
\begin{align*}
x_{k+1} = a_k(x_k)+ w_k \mod 2 \pi
\end{align*}
with $a_k: \Sone \to \Sone$ and $w_k \in \Sone$. More particular, we also consider the special case, where $a_k$ is the identity, i.e.,
\begin{align*}
x_{k+1} = x_k + w_k \mod 2 \pi.
\end{align*}

\subsubsection{Measurement Model}
The system state cannot be observed directly, but may only be estimated based on measurements that are disturbed by noise. A general measurement function is given by
\begin{align*}
\hat{z}_{k} = h_k(x_k,v_k) \ ,
\end{align*}
where $\hat{z}_{k} \in Z$ is the measurement in the measurement space $Z$, $h_k: \Sone \times V \to Z$ is the measurement function and $v_k \in V$ is arbitrary measurement noise in a certain noise space $V$. Note that the measurement and noise space can be arbitrary sets in general.  An interesting special case are measurement functions where the measurement noise is additive, i.e.,
\begin{align*}
\hat{z}_{k} = h_k(x_k) + v_k
\end{align*}
with measurement function $h_k: \Sone \to Z$ and $v_k \in Z$. In this case, we require $Z$ to have a group structure with $+$ as the operation. Additionally, we consider the more specific case where $h_k$ is the identity, i.e.,
\begin{align*}
\hat{z}_{k} = x_k + v_k \mod 2 \pi
\end{align*}
with $\hat{z}_{k}, v_k \in \Sone$.

\begin{remark}
We do not consider linear system models because linearity is a concept of vector spaces, not manifolds \cite{ACC13_Kurz}. For this reason, there are no linear functions on the circle.
\end{remark}

\subsection{Standard Approaches}
\label{sec:problemformulation:standardapproaches}
As circular estimation problems are wide-spread in a variety of applications, a number of standard approaches have been employed. We introduce three of the most common methods and explain their strengths and weaknesses.

\subsubsection{Gaussian-based approaches}
Gaussian-based methods (wrongly) assume a Gaussian distribution and use standard filtering techniques for Gaussians in conjunction with certain modifications to allow their application to circular problems.
\paragraph{One-Dimensional Methods}
One common approach is the use of a standard Kalman filter \cite{kalman1960} or, in case of nonlinear system or measurement functions, unscented Kalman filter (UKF) \cite{julier2004} with a scalar state $x_k$ containing the angle $\theta_k$, i.e., $x_k = \theta_k$. However, two modifications are necessary before this approach can be used in practice. First, the estimate has to be forced to stay inside the interval $[0,2 \pi)$ by performing a modulo operation after every prediction and/or update step. 

\begin{figure}
	\centering
	\begin{subfigure}{0.45\textwidth}
	\centering
	\includegraphics[width=6cm]{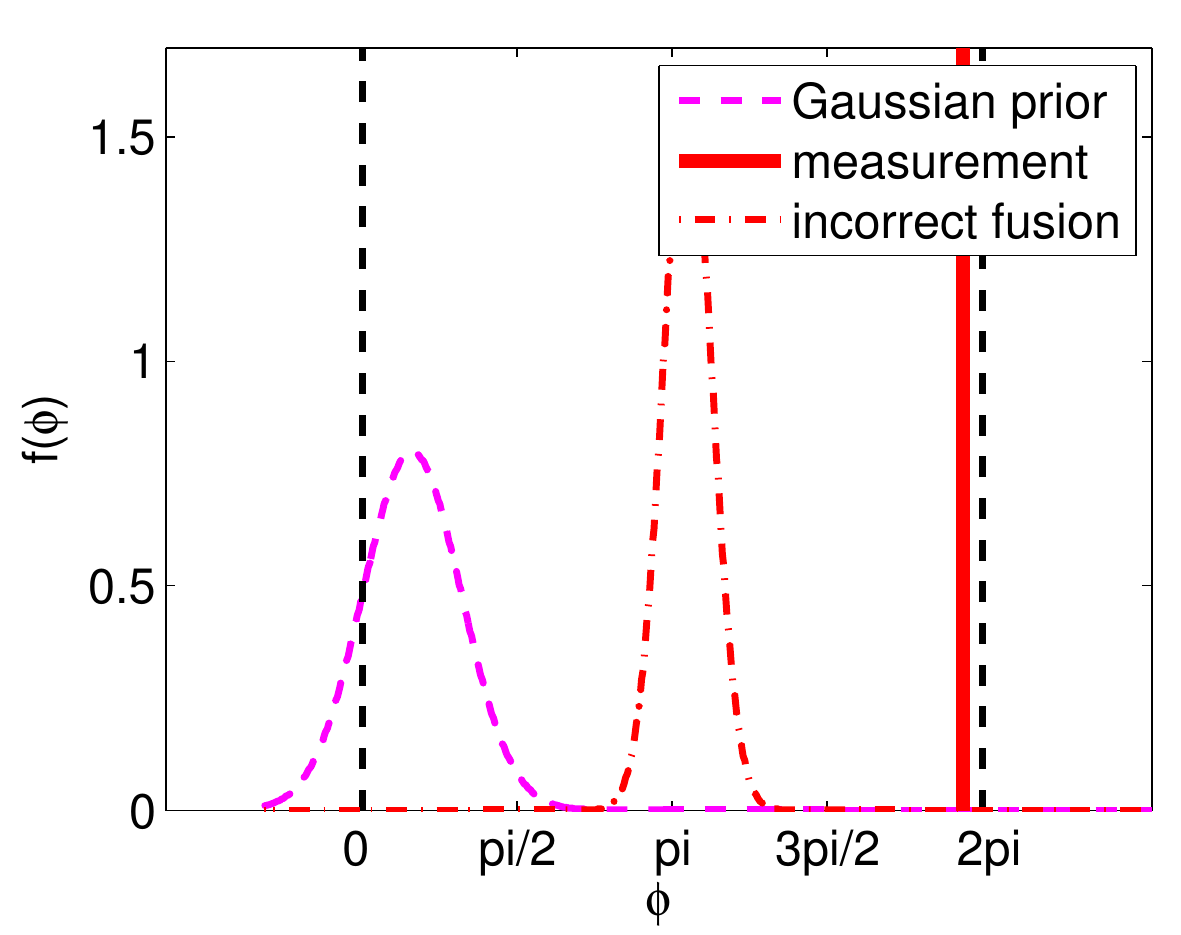}
	\caption{Incorrect fusion.}
	\end{subfigure}
	\begin{subfigure}{0.45\textwidth}
	\centering
	\includegraphics[width=6cm]{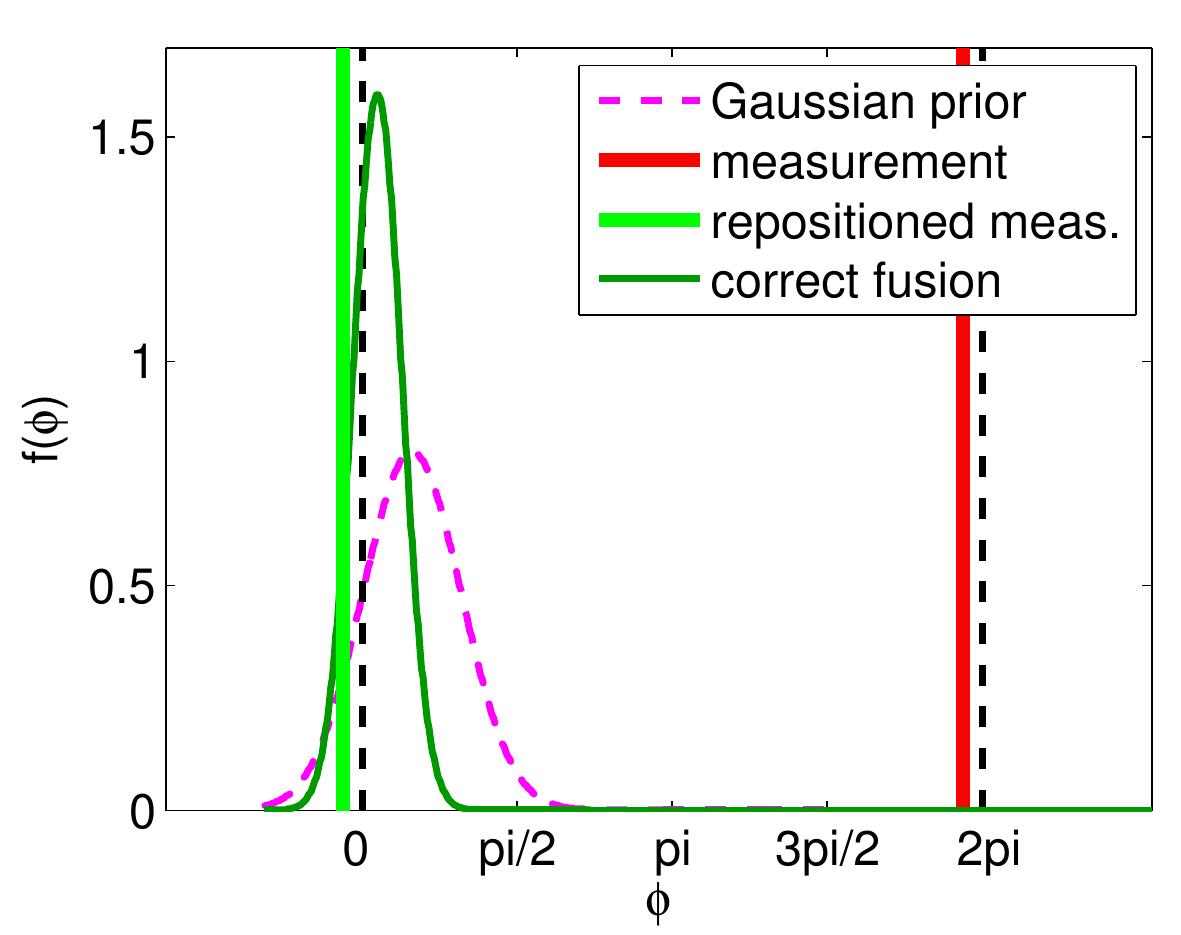}
	\caption{Correct fusion.}
	\end{subfigure}
	\caption{This figure illustrates that repositioning of the measurement is necessary to obtain satisfactory performance when using classical filters on circular problems.}
	\label{fig:measurementrepositioning}
\end{figure}

Second, if the measurement space is periodic, the measurement needs to be repositioned to be closer to the state mean in certain cases. This problem occurs whenever the measurement and the current state mean are more than $\pi$ apart. In this case, the measurement needs to be moved by $\pm 2 \pi k$ to an equivalent measurement that deviates at most $\pi$ from the state mean. An illustration of this problem is given in Fig.~\ref{fig:measurementrepositioning}.

This type of filter is used as a comparison in \cite{ACC13_Kurz}. When the uncertainty is small, this kind of approach works fairly well, but it tends to produce unsatisfactory results if the uncertainty is high. 

\paragraph{Two-Dimensional Methods}
Another common approach is based on the Kalman filter or the UKF with two-dimensional state subject to a nonlinear constraint. More specifically, an angle $\theta_k$ is represented by a state vector $\vecx_k = [\cos(\theta_k), \sin(\theta)_k]^T$ and the constraint is $||\vecx_k|| = 1$ to enforce that $\vecx_k$ is on the unit circle. In order to enforce this constraint, $\vecx_k$ is projected to the unit circle after each prediction and/or update step. More sophisticated approaches increase the covariance to reflect the fact that the projection operation constitutes an increase in uncertainty~\cite{julier2007}.

This approach has been used in \cite{IPIN13_Kurz}, but did not perform as well as the filter based on circular statistics. One of the issues of this approach is the fact that the system and measurement model sometimes become more complicated when the angle $\theta_k$ is transformed to a two-dimensional vector.

\subsubsection{Particle Filters}
Another method that can be applied is particle filtering \cite{arulampalam2002}. Particle filters on nonlinear manifolds are fairly straightforward to implement because each particle can be treated separately. For the particle filter to work, the system function and the measurement likelihood both need to respect the underlying topology. The reweighting step as well as the commonly used sequential importance resampling (SIR) are independent of the underlying manifold and can be used in a circular setting as well.

However, issues that are typically associated with particle filters arise. If the measurement likelihood function is very narrow, particle degeneration can occur, i.e., (almost) all particles have zero weight after the reweighting step. Furthermore, a large number of particles is required to obtain stable results. Even though these problems are less critical in a one-dimensional setting, there can still be issues if measurements with high certainty occur in areas with few particles. This can, for example, occur when information from sensors with very different degrees of accuracy is fused. It should also be noted that sampling from certain circular distributions can be somewhat involved and require the use of the Metropolis Hastings algorithm \cite{hastings1970} or similar approaches, e.g., in case of the von Mises distribution.

\section{Circular Statistics}
\label{sec:circularstatistics}
Because of the drawbacks of the approaches discussed above, we propose a filtering scheme based on circular statistics \cite{jammalamadaka2001}, \cite{mardia1999}. In the following, we introduce the required fundamentals from the field of circular statistics.

\subsection{Circular Distributions}
Circular statistics considers probability distributions defined on the unit circle. A variety of distributions has been proposed \cite{batschelet1981}. We give definitions of all distributions that are required for the proposed filtering scheme.

\begin{figure}
	\centering
	\begin{subfigure}{0.45\textwidth}
	\centering
	\includegraphics[width=\textwidth]{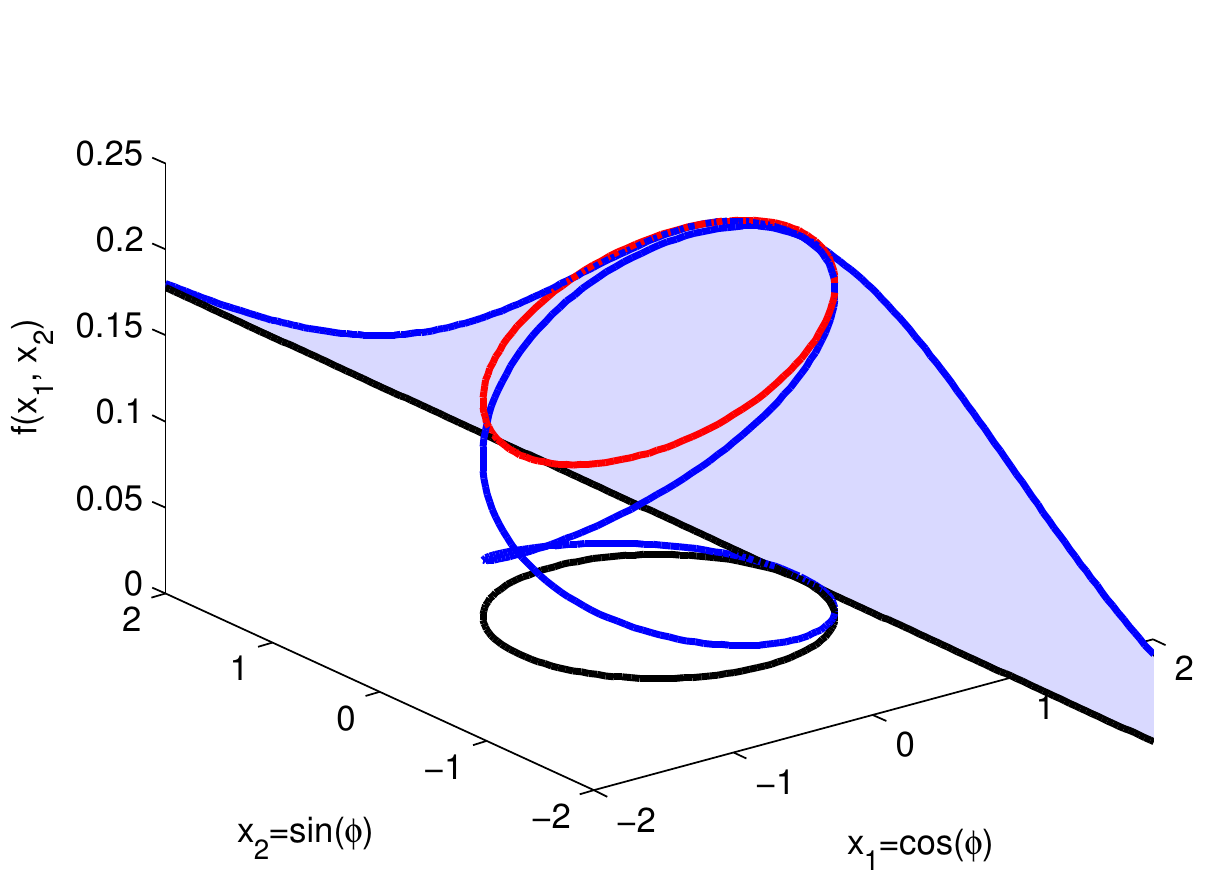}
	\caption{The wrapped normal distribution (red) is obtained by wrapping a Gaussian distribution (blue) around the circle. The parameters we used for this example are $\mu=0, \sigma=2$.}
	\label{fig:gausswn}
	\end{subfigure}
	\qquad
	\begin{subfigure}{0.45\textwidth}
	\centering
	\includegraphics[width=\textwidth]{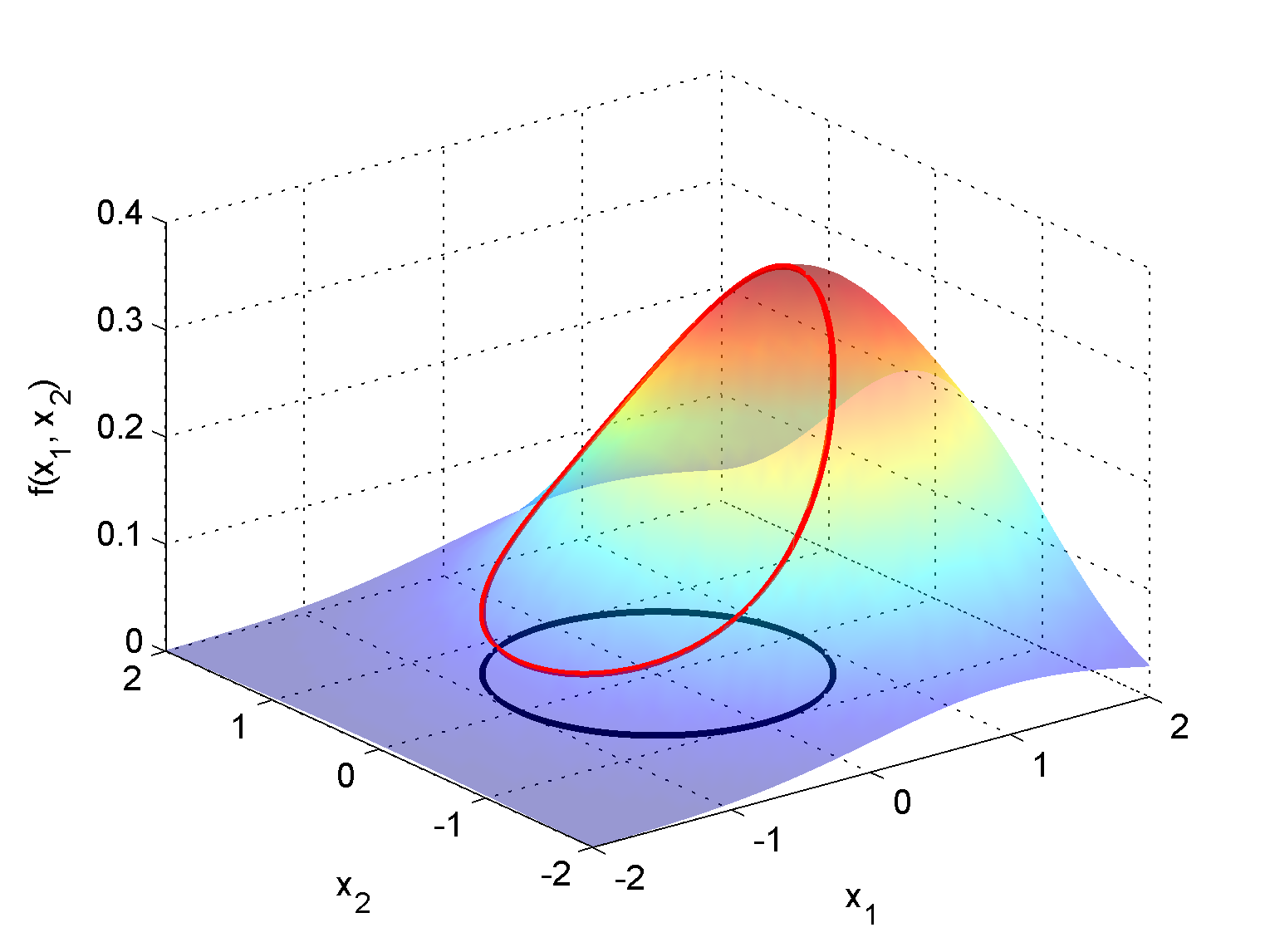}
	\caption{The von Mises distribution (red) arises when restricting a two-dimensional Gaussian with $\vecmu = (\cos \mu, \sin \mu)^T$ and covariance $\kappa \cdot \matI_{2 \times 2}$ to the unit circle. The parameters for this example are $\mu=0, \kappa=1$.}
	\label{fig:gaussvm}
	\end{subfigure}
	\caption{Relation between WN and VM distributions and the Gaussian distribution.}
\end{figure}

\begin{definition}[Wrapped Normal Distribution]
	The wrapped normal (WN) distribution is given by the probability density function (pdf)
	\begin{align*}
	f(x; \mu, \sigma) = \frac{1}{\sqrt{2\pi} \sigma} \sum_{k=-\infty}^\infty \exp \left( - \frac{(x-\mu + 2\pi k)^2}{2 \sigma^2} \right)
	\end{align*}
	with $x\in \Sone$, location parameter $\mu \in \Sone$, and concentration parameter $\sigma>0$.
\end{definition}
The WN distribution is obtained by wrapping a one-dimensional Gaussian distribution around the unit circle and adding up all probability mass that is wrapped to the same point (see Fig.~\ref{fig:gausswn}). It appears as a limit distribution on the circle \cite{ACC13_Kurz} in the following sense. A summation scheme of random variables that converges to the Gaussian distribution in the linear case, will converge to the WN distribution if taken modulo $2 \pi$. Because of its close relation to the Gaussian distribution, the WN distribution inherits a variety of its properties, for example its normalization constant as well as the formula for the convolution of densities. Even though there is an infinite sum involved, evaluation of the pdf of a WN distribution can be performed efficiently, because only few summands need to be considered \cite{SDF14_Kurz}.

\begin{definition}[Von Mises Distribution]
	The von Mises (VM) distribution is given by the pdf
	\begin{align*}
	f(x;\mu, \kappa) = \frac{1}{2 \pi I_0(\kappa)} \exp(\kappa \cos(x-\mu))
	\end{align*}
	with  $x\in \Sone$, location parameter $\mu \in \Sone$, and concentration parameter $\kappa>0$. $I_0(\cdot)$ is the modified Bessel function \cite{abramowitz1972} of order $0$.
\end{definition}
The VM distribution, sometimes also referred to as the circular normal (CN) distribution, is obtained by restricting a two-dimensional Gaussian with mean $||\vecmu||=1$ and covariance $\kappa \cdot \matI_{2 \times 2}$ (where $ \matI_{2 \times 2}$ is the identity matrix) to the unit circle and reparameterizing to $[0,2 \pi)$ as can be seen in Fig.~\ref{fig:gaussvm}. For this reason, it also inherits some of the properties of the Gaussian distribution; most importantly, it is closed under Bayesian inference. The VM distribution has been used as a foundation for a circular filter by Azmani et al. \cite{azmani2009}. It is closely related to the Bingham distribution and conversion between the two is effectively a matter of shrinking the the interval $[0,2 \pi)$ by a factor of two two, i.e., $f(2x;\mu, \kappa)$ is a Bingham distribution \cite[p. 4]{mardia1975characterizations}.

\begin{definition}[Wrapped Dirac Mixture Distribution]
	The wrapped Dirac mixture (WD) distribution with $L$ components is given by 
	\begin{align*}
	f(x;\gamma_1, \dots, \gamma_L, \beta_1, \dots, \beta_L) = \sum_{j=1}^{L} \gamma_j \delta (x-\beta_j)
	\end{align*}
	with Dirac delta distribution $\delta(\cdot )$, Dirac positions $\beta_1, \dots, \beta_L \in S^1$, and weights $\gamma_1, \dots, \gamma_L>0$ where $\sum_{j=1}^L \gamma_j = 1$.
\end{definition}
Unlike the WN and VM distributions, the WD distribution is a discrete probability distribution on a continous domain. It can be obtained by wrapping a Dirac mixture in $\mathbb{R}$ around the unit circle. WD distributions can be used as discrete approximations of continuous distributions with a finite set of samples.

In this paper, we use the following notation. We denote the density function of a WN distribution with parameters $\mu$ and $\sigma$ with $\WN(\mu, \sigma)$. If a random variable $x$ is distributed according to this WN distribution, we write $x \sim \WN(\mu, \sigma)$. The terms $\VM(\mu, \kappa)$ and $\WD(\gamma_1, \dots, \gamma_L, \beta_1, \dots, \beta_L)$ are used analogously to describe the density functions of VM and WD distributions with parameters $\mu, \kappa$ and $\gamma_1, \dots, \gamma_L, \beta_1, \dots, \beta_L$, respectively.

\subsection{Circular Moments}
In circular statistics, there is a concept called circular (or trigonometric) moment. 

\begin{definition}[Circular Moments]
	For a random variable $x \sim f(x)$ defined on the circle, its $n$-th circular moment is given by
	\begin{align*}
	m_n &= \expect{\exp(ix)^n} = \expect{\exp(inx)} \\
	&=\int_0^{2\pi} \exp(inx) \cdot f(x)\di x \in \mathbb{C}
	\end{align*}
	with imaginary unit $i$.
\end{definition}
The $n$-th moment is a complex number and, hence, has two degrees of freedom, the real and the imaginary part. For this reason, the first moment includes information about the circular mean $\arg m_1 = \atan2(\Im m_1, \Re m_1)$ as well as the concentration $|m_1| = \sqrt{(\Re m_1)^2 + (\Im m_1)^2}$ of the distribution\footnote{The term $1-|m_1|$ is sometimes referred to as circular variance.}, similar to the first two real-valued moments. It can be shown that WN and VM distributions are uniquely defined by their first circular moment \cite{jammalamadaka2001}.

\begin{remark}
Any continuous and piecewise continuously differentiable $2\pi$-periodic function $f: \mathbb{R} \to \mathbb{R}$ can be written as a Fourier series 
\begin{align*}	
f(x) &= \sum_{k=-\infty}^\infty c_k \exp(i  k  x) \\
\text{where} \quad 
c_k &= \frac{1}{2 \pi} \int_0^{2 \pi} f(x) \exp(-i k x) dx \ .
\end{align*}
If $f(\cdot)$ is the pdf of a circular probability distribution, the Fourier coefficients are closely related to the circular moments according to $c_k = \frac{1}{2 \pi} m_{-k}$.
\end{remark}

\begin{lemma}[Moments for WN, VM, and WD Distributions]
	For WN, VM, and WD distributions with given parameters, the $n$-th circular moment can be calculated according to
	\begin{align*}
	m_n^{WN} &= \exp \left( in\mu - \frac{n^2 \sigma^2}{2} \right) \ , \\
	m_n^{VM} &= \exp(in\mu) \frac{I_{|n|}(\kappa)}{I_0(\kappa)} \ , \\
	m_n^{WD} &= \sum_{j=1}^L \gamma_j \exp(in\beta_j) \ .
	\end{align*}
\end{lemma}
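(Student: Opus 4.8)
The plan is to handle the three distributions separately, in each case simply inserting the defining density into the integral definition $m_n = \int_0^{2\pi}\exp(inx) f(x)\di x$ of the $n$-th circular moment. The WD case is immediate: inserting the density and using the sifting property of the Dirac delta (the interchange of a finite sum with the integral is trivially valid) gives $m_n^{WD} = \sum_{j=1}^L \gamma_j \int_0^{2\pi}\exp(inx)\,\delta(x-\beta_j)\di x = \sum_{j=1}^L \gamma_j \exp(in\beta_j)$.

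For the WN distribution I would exploit that its density is a wrapped Gaussian. After inserting the series and interchanging summation and integration (justified by nonnegativity of the summands, i.e.\ by Tonelli's theorem), the $2\pi$-periodicity of $x\mapsto\exp(inx)$ for integer $n$ allows the wrapped terms to be reassembled into a single integral over all of $\mathbb{R}$, namely $m_n^{WN} = \int_{-\infty}^{\infty}\exp(inx)\,\tfrac{1}{\sqrt{2\pi}\sigma}\exp\!\big(-\tfrac{(x-\mu)^2}{2\sigma^2}\big)\di x$. This is exactly the characteristic function of a Gaussian with mean $\mu$ and standard deviation $\sigma$ evaluated at the integer $n$, which equals $\exp(in\mu - n^2\sigma^2/2)$; alternatively one obtains the same result directly by completing the square in the exponent.

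For the VM distribution I would first substitute $y = x-\mu$ and use $2\pi$-periodicity to keep the range of integration equal to $[0,2\pi)$, which factors out $\exp(in\mu)$ and leaves $\tfrac{1}{2\pi I_0(\kappa)}\int_0^{2\pi}\exp(iny)\exp(\kappa\cos y)\di y$. Writing $\exp(iny)=\cos(ny)+i\sin(ny)$, the sine contribution vanishes because $\sin(ny)\exp(\kappa\cos y)$ changes sign under the substitution $y\mapsto 2\pi-y$, while the cosine contribution is precisely the classical integral representation of the modified Bessel function, $I_n(\kappa) = \tfrac{1}{2\pi}\int_0^{2\pi}\exp(\kappa\cos y)\cos(ny)\di y$. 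Since $I_n = I_{-n}$, this produces $I_{|n|}(\kappa)$, and dividing by $I_0(\kappa)$ yields the claim.

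The only genuinely non-routine ingredient is recalling the Bessel integral representation used in the VM case, which can simply be cited from a standard reference such as \cite{abramowitz1972}; everything else reduces to direct computation, the mildest technical point being the justification of the sum--integral interchange in the WN case.
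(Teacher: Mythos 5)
Your proof is correct: the sifting property for WD, unwrapping the wrapped Gaussian into the characteristic function of $\mathcal{N}(\mu,\sigma^2)$ at integer $n$ for WN, and the Bessel integral representation $I_n(\kappa)=\tfrac{1}{2\pi}\int_0^{2\pi}\exp(\kappa\cos y)\cos(ny)\di y$ together with $I_{-n}=I_n$ for VM. The paper itself gives no proof and simply cites \cite{mardia1999}; your computations are exactly the standard arguments found there, so there is nothing to compare beyond noting that your appeal to ``Tonelli'' should strictly be Fubini applied after checking absolute integrability, since the summands carry the complex factor $\exp(inx)$.
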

A proof is given in \cite{mardia1999}.

\subsection{Circular Moment Matching}
As both WN and VM distributions are uniquely defined by their first moment, it is possible to convert between them by matching the first circular moment.

\begin{lemma}[Circular Moment Matching]
	\label{lemma:circularmomentmatching}
	We define $A(x) = \frac{I_1(x)}{I_0(x)}$ as given in \cite{mardia1999}.
	\begin{enumerate}
	\item For a given first moment $m_1$, the WN distribution with this first moment has the density $\WN\left( \atan2(\Im m_1, \Re m_1), \sqrt{-2 \log \left( |m_1| \right)} \right)$.
	\item For a given first moment $m_1$, the VM distribution with this first moment has the density $\VM\left( \atan2(\Im m_1, \Re m_1),  A^{-1}(|m_1|) \right)$.
	\item For a given VM distribution with density $\VM(\mu, \kappa)$, the WN distribution with identical first moment has the density $\WN\left( \mu, \sqrt{-2 \log \left(\frac{I_1(\kappa)}{I_0(\kappa)}\right)} \right)$.
	\item For a given WN distribution with density $\WN(\mu, \sigma)$, the VM distribution with identical first moment has the density $\VM\left(\mu, A^{-1}\left(\exp \left(- \frac{\sigma^2}{2} \right)\right)\right)$
	\end{enumerate}
\end{lemma}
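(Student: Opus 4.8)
The plan is to derive all four statements directly from the moment formulas of the previous lemma, evaluated at $n=1$, together with two structural facts: that a WN (respectively VM) distribution is uniquely determined by its first circular moment, which has already been noted above, and that the Bessel ratio $A(x) = I_1(x)/I_0(x)$ is a strictly increasing bijection from $(0,\infty)$ onto $(0,1)$, so that $A^{-1}$ is well defined on $(0,1)$. The latter I would either cite from \cite{mardia1999} or establish via the standard monotonicity argument for ratios of modified Bessel functions.

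For the first two parts, I would start from the closed forms $m_1^{WN} = \exp(i\mu - \sigma^2/2)$ and $m_1^{VM} = \exp(i\mu)\, A(\kappa)$. Writing a prescribed first moment in polar form as $m_1 = |m_1|\exp(i\arg m_1)$ with $\arg m_1 = \atan2(\Im m_1, \Re m_1)$, matching modulus and argument in the WN case gives $\mu = \arg m_1$ and $\exp(-\sigma^2/2) = |m_1|$, hence $\sigma = \sqrt{-2\log|m_1|}$, which is a valid (positive, real) parameter precisely because $|m_1|\in(0,1)$. The VM case is identical except that matching the modulus yields $A(\kappa) = |m_1|$ and therefore $\kappa = A^{-1}(|m_1|)$, using invertibility of $A$. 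Uniqueness then guarantees that these are the only WN respectively VM distributions with the given first moment.

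Parts~3 and~4 follow by composition. For part~3, I would compute the first moment of $\VM(\mu,\kappa)$ from the moment lemma, obtaining $m_1 = \exp(i\mu)\, I_1(\kappa)/I_0(\kappa)$, so that $\arg m_1 = \mu$ and $|m_1| = I_1(\kappa)/I_0(\kappa)$, and substitute into the WN formula of part~1 to get $\sigma = \sqrt{-2\log(I_1(\kappa)/I_0(\kappa))}$. For part~4, I would compute the first moment of $\WN(\mu,\sigma)$, namely $m_1 = \exp(i\mu - \sigma^2/2)$ with $|m_1| = \exp(-\sigma^2/2)$, and substitute into the VM formula of part~2 to obtain $\kappa = A^{-1}(\exp(-\sigma^2/2))$. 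In both cases the location parameter is preserved because $\arg m_1 = \mu$.

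The computations are routine one-line substitutions; the only genuine subtlety is the well-definedness of $A^{-1}$, and the implicit domain restriction $|m_1|<1$, which is automatic for any WN or VM distribution since both are non-degenerate. So the main obstacle is simply pinning down the monotonicity and range of the Bessel ratio $A$, after which all four claims reduce to inverting the explicit moment expressions.
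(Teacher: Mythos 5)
Your proposal is correct and follows the standard moment-matching derivation: evaluate the closed-form first moments $m_1^{WN}=\exp(i\mu-\sigma^2/2)$ and $m_1^{VM}=\exp(i\mu)A(\kappa)$ at $n=1$, match modulus and argument, and invert, with parts~3 and~4 obtained by composition. The paper itself defers the proof to the cited reference \cite{ACC13_Kurz}, where exactly this argument is carried out, so your route is essentially the same as the intended one; your explicit attention to the invertibility of $A$ and the restriction $|m_1|\in(0,1)$ is a correct and welcome addition.
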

The proof is given in \cite{ACC13_Kurz}. Calculation of the function $A^{-1}(\cdot)$ is somewhat tricky. In \cite{ACC13_Kurz}, we use the algorithm by Amos \cite{amos1974}\footnote{Pseudocode of this algorithm is given in \cite{ACC13_Kurz}.} to calculate $A(\cdot)$ and MATLAB's \texttt{fsolve} to invert this function. Stienne et al. have proposed closed-form approximations, which can be calculated very easily but have a large approximation error \cite{stienne2012}, \cite{stienne2013}. A more detailed discussion on approximations of $A^{-1}(\cdot)$ can be found in \cite{sra2012} and \cite[Sec.~2.3]{stienne2013thesis}.

\section{Deterministic Sampling}
\label{sec:deterministic}
In order to propagate continous probability densities through nonlinear functions, it is a common technique to use discrete sample-based approximations of the continous densities. A set of samples can easily be propagated by applying the nonlinear function to each sample individually. This approach can be used for both the prediction and the measurement update step.

We distinguish between deterministic and nondeterministic sampling. Nondeterministic sampling relies on a randomized algorithm to stochastically obtain samples of a density. Typical examples include the samplers used by the particle filter \cite{arulampalam2002} or the Gaussian particle filter \cite{kotecha2003}. Deterministic sampling selects samples in a deterministic way, for example to fit certain moments (the sampler used by the UKF, \cite{julier2004}), or to optimally approximate the shape of the density (published in \cite{MFI08_Hanebeck-LCD}, this sampler is used by the  S$^2$KF, \cite{Fusion13_Steinbring}). Deterministic sampling schemes have the advantage of requiring a significantly smaller number of samples, which is why we will focus on this type of solution.

A na\"ive solution for approximating a WN density may be the application of a deterministic sampling scheme for the Gaussian distribution (such as the samplers used in \cite{julier2004}, \cite{Fusion13_Steinbring}) and subsequently wrapping the samples. Even though this technique is valid for stochastic samples, it does not provide satisfactory results for deterministic samples. In extreme cases, wrapping can cause different samples to be wrapped to the same point, grossly misrepresenting the original density. This problem is illustrated in Fig.~\ref{fig:discretization-vs-ukf}. In the case of UKF samples (Fig.~\ref{fig:ukfsamples}), it can be seen that for $\sigma \approx 2.5$, one sample is placed at $\mu$ and two samples are placed on the opposite side of the circle, i.e., the mode of the approximation is opposite to the true mode. Furthermore, for $\sigma \approx 5$ all three UKF samples are wrapped to the same position, i.e., the sample-based approximation degenerates to a distribution with single Dirac component even though the true distribution is nearly uniform. Similar issues arise in the case of S$^2$KF samples (see Fig.~\ref{fig:s2kfsamples}).

\subsection{Analytic Solutions}
\label{sec:deterministic:analytic}
First of all, we consider analytic solutions to obtain deterministic samples. These solutions are based on circular moment-matching and only provide a small, fixed number of Dirac delta components, but are extremely fast to calculate, making them a good choice for real-time applications.

In \cite{ACC13_Kurz}, we presented a method to obtain a WD approximation with three equally weighted components, which is based on matching the first circular moment (see Algorithm~\ref{algo:3diracs}). We further extended this scheme to obtain a WD with five components by matching the first as well as second circular moment (see Algorithm~\ref{algo:5diracs}), which, as we proved in \cite{Fusion14_KurzGilitschenski}, necessitates the use of different weights. Both approaches can approximate arbitrary symmetric circular densities with given circular moments.

\begin{algorithm}
	\KwIn{first circular moment $m_1$}
	\KwOut{$\WD(\gamma_1, \dots, \gamma_3, \beta_1, \dots, \beta_3)$}
	\tcc{extract $\mu$}
	$\mu \gets \atan2 (\Im m_1, \Re m_1)$\;
	\tcc{obtain Dirac positions}
	$\alpha \gets \arccos\left( \frac{3}{2} |m_1|- \frac{1}{2}\right)$\; 
	$\beta_1 \gets \mu - \alpha \mod 2 \pi $\;
	$\beta_2 \gets \mu \mod 2 \pi$\;
	$\beta_3 \gets \mu + \alpha \mod 2 \pi $\;
	\tcc{obtain weights}
	$\gamma_1, \gamma_2, \gamma_3 \gets \frac{1}{3}$\;
	\caption{Deterministic approximation with $L=3$ components.}
	\label{algo:3diracs}
\end{algorithm}

\begin{algorithm}
	\KwIn{first circular moment $m_1$, second circular moment $m_2$,\\ parameter $\lambda \in [0,1]$ with default $\lambda=0.5$}
	\KwOut{$\WD(\gamma_1, \dots, \gamma_5, \beta_1, \dots, \beta_5)$}
	\tcc{extract $\mu$}
	$\mu \gets \atan2 (\Im m_1, \Re m_1)$\;
	$m_1 \gets |m_1|$\; 
	$m_2 \gets |m_2|$\;
	\tcc{obtain weights}
	$\gamma_5^\text{min} \gets (4 m_1^2 - 4 m_1 - m2 + 1)/(4 m_1 - m_2 - 3)$\;
	$\gamma_5^\text{max} \gets (2 m_1^2 - m_2-1)/(4 m_1 - m_2 - 3)$\;
	$\gamma_5 \gets \gamma_5^\text{min} + \lambda (\gamma_5^\text{max}-\gamma_5^\text{min})$\;
	$\gamma_1, \gamma_2, \gamma_3, \gamma_4 \gets (1 - \gamma_5)/4$\;
	\tcc{obtain Dirac positions}
	$c_1 \gets \frac{2}{1-\gamma_5} (m_1-\gamma_5)$\;
	$c_2 \gets \frac{1}{1-\gamma_5} (m_2-\gamma_5)+1$\;
	$x_2 \gets (2c_1 + \sqrt{4c_1^2 - 8 (c_1^2-c_2)})/4$\;
	$x_1 \gets c_1-x_2$\;
	$\phi_1 \gets \arccos(x_1)$\;
	$\phi_2 \gets \arccos(x_2)$\;
	$(\beta_1, \dots, \beta_5) \gets \mu + (-\phi_1, +\phi_1, -\phi_2, +\phi_2, 0) \mod 2 \pi$\;
	\caption{Deterministic approximation with $L=5$ components.}
	\label{algo:5diracs}
\end{algorithm}

\begin{figure}
	\centering
	\begin{subfigure}{0.45\textwidth}
	\includegraphics[width=\textwidth]{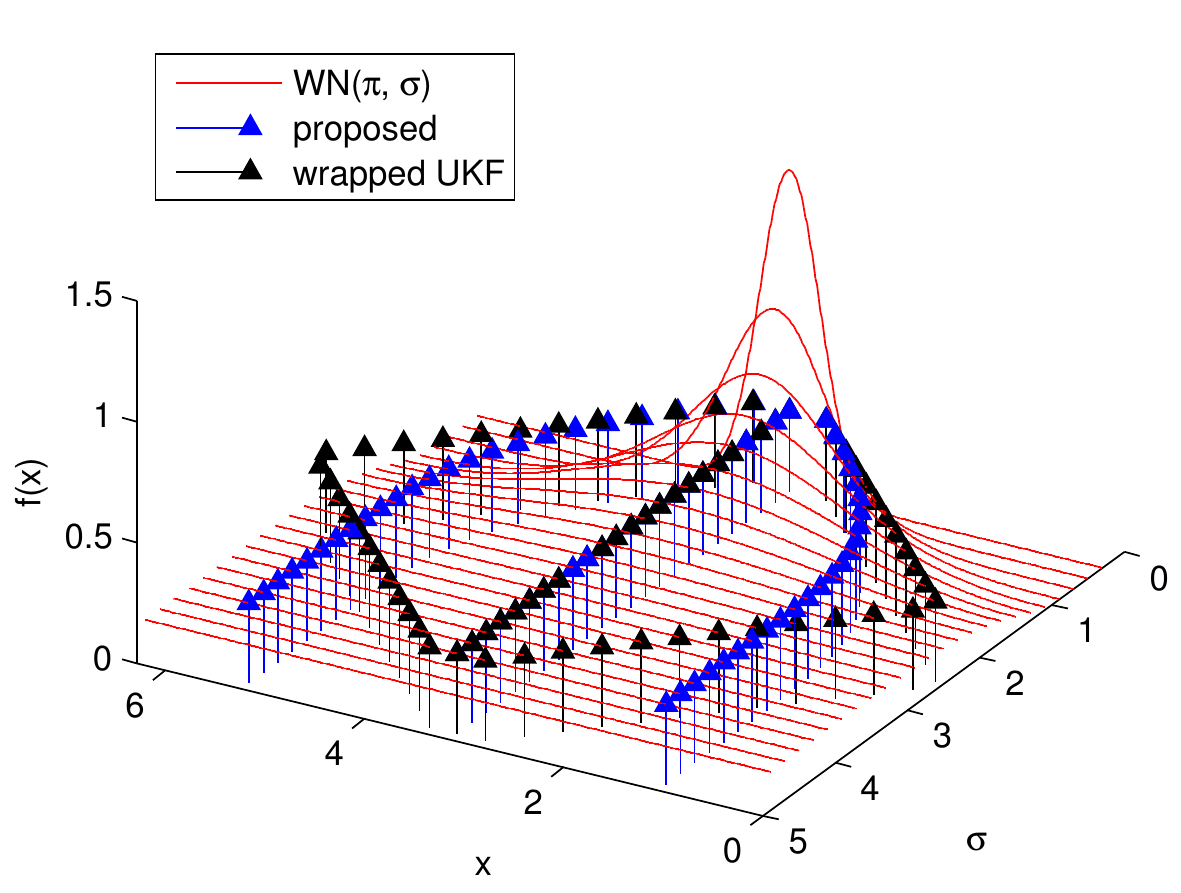}
	\caption{Analytic solution from Algorithm~\ref{algo:3diracs} and wrapped UKF \cite{julier2004} samples.}
	\label{fig:ukfsamples}
	\end{subfigure}
	\qquad
	\begin{subfigure}{0.45\textwidth}
	\includegraphics[width=\textwidth]{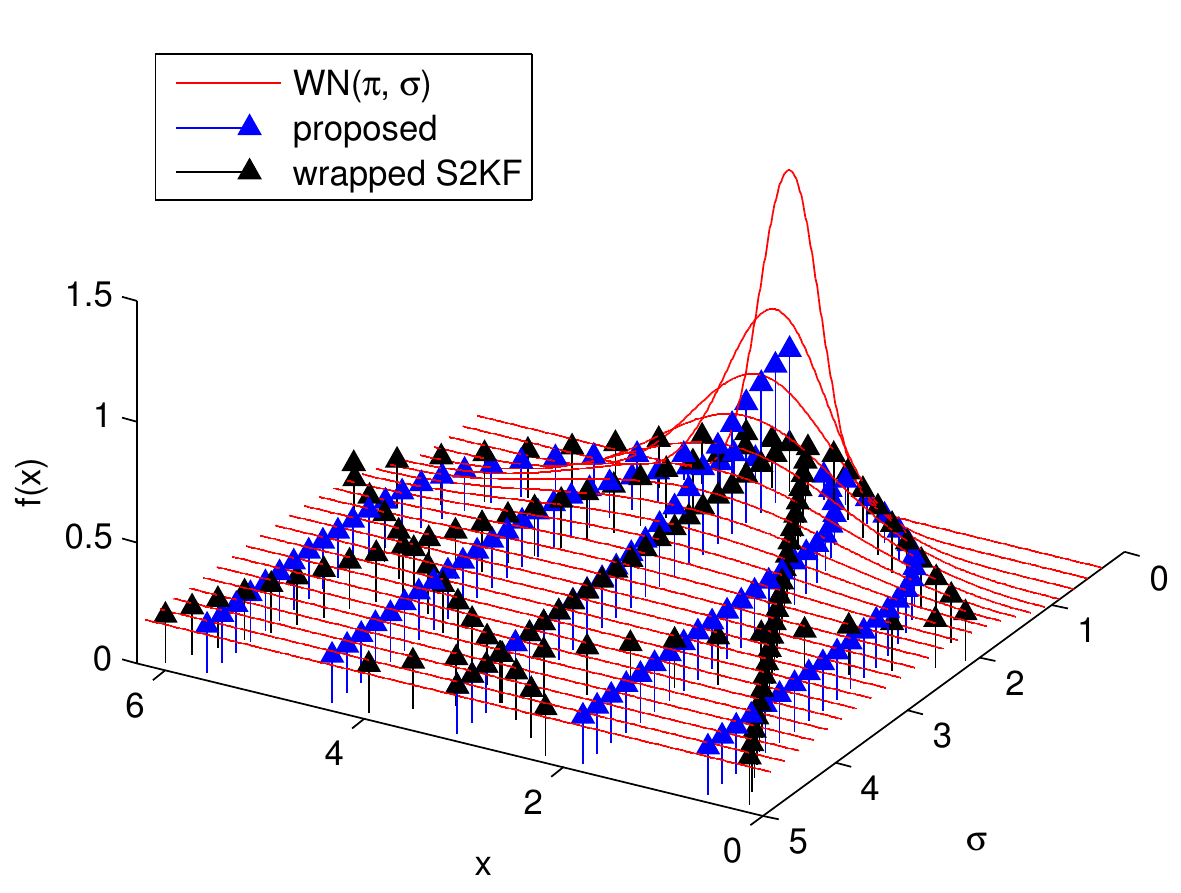}
	\caption{Analytic solution from Algorithm~\ref{algo:5diracs} with $\lambda = 0.8$ and wrapped S$^2$KF \cite{Fusion13_Steinbring} samples.}
	\label{fig:s2kfsamples}
	\end{subfigure}
	\caption{Proposed approaches for generating samples of WN distributions with a different concentration parameters $\sigma$ compared to the na\"ive approach of wrapping samples of a Gaussian with identical $\sigma$. It can be seen that the UKF and S$^2$KF samples are eventually wrapped to the same location, which produces an extremely poor approximation.}
	\label{fig:discretization-vs-ukf}
\end{figure}

\subsection{Optimization-based Solutions}
\label{sec:deterministic:optimization}
If a larger number of samples is desired and there are more degrees of freedom in the samples than constraints (such as preservation of circular moments), optimization-based solutions can be used. The number of samples can be adjusted by the user and an optimal approximation is derived by minimizing a distance measure. 

In order to simultaneously calculate optimal locations and weights for the samples, a systematic approach based on VM kernels has been proposed in \cite{IFAC14_Hanebeck}. For a WD mixture, an induced VM mixture is compared to the true distribution with a quadratic integral distance. A specific kernel width is considered for each component, which depends on the weight of the component and the value of the true distribution at the location of the component. Both the weights and the locations of a fixed even number of WD components are optimized to obtain an optimal symmetric approximation. Constraints in the optimization algorithms are used to maintain a predefined number of circular moments. This approach results in well-distributed Dirac mixtures that fulfill the moment constraints.

\begin{figure}
\centering
\includegraphics[height=65mm]{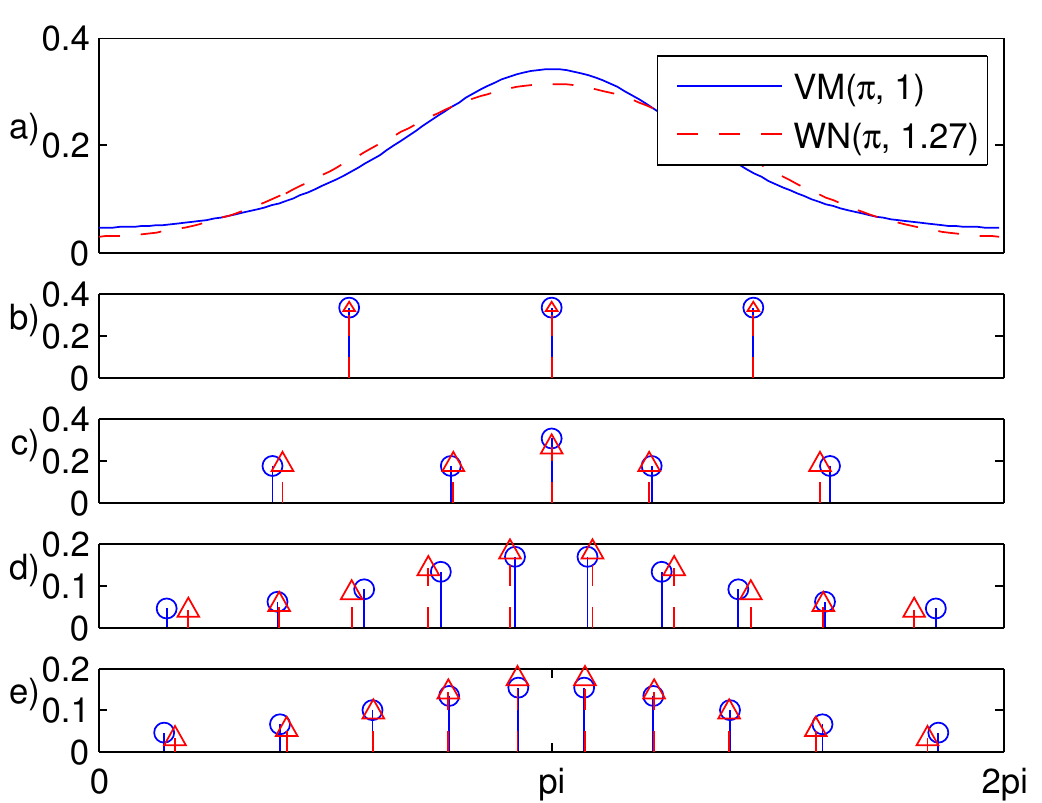}
\caption{Example of the deterministic sampling of a VM and WN distribution with equal first circular moment. From top to bottom: a) original densities, b) result of Algorithm~\ref{algo:3diracs}, c) result of Algorithm~\ref{algo:5diracs}, d) approach based on VM kernels from \cite{IFAC14_Hanebeck} for 10 components, e) quantization approach from \cite{unpublished_Quantization} for 10 components. Note that the result of Algorithm~\ref{algo:3diracs} is identical for both densities because only the first circular moment is considered. }
\label{fig:discretization}
\end{figure}

A quantization approach is discussed in \cite{unpublished_Quantization}. It is based on computing optimal Voronoi quantizers. In this approach, optimality refers to minimal quadratic distortion. The resulting Voronoi quantizer gives rise to a circular discrete probability distribution on a continuous domain that approximates the original continuous distribution. Use of this approximation is particularly beneficial in the prediction step of stochastic filters, because an error bound for propagation through a non-trivial system function can be obtained without actually knowing that function. It is sufficient to require it to be Lipschitz and to know an upper bound for the Lipschitz constant. Furthermore, circular moment constraints can be introduced in the optimization procedure of the quantization approach.

Examples from all discussed methods for deterministic sampling are depicted in Fig.~\ref{fig:discretization}.

\section{Operations on Densities}
\label{sec:operations}
In order to derive a circular filtering algorithm, we need to be able to perform certain operations on the involved probability densities.

\subsection{Shifting and Mirroring}
\label{sec:operations:shifting}
For a given density $f(x)$, we want to obtain the density $f(c-x)$ for a constant $c \in \Sone$. This operation is necessary in certain cases of the update step. We can split this operation into two steps: mirroring to obtain $f(-x)$, and subsequent shifting by $c$ to obtain $f(c + (-x))$. Mirroring $\WN(\mu, \sigma)$ and $\VM(\mu, \kappa)$ yields
$\WN(2\pi - \mu, \sigma)$ and $\VM(2 \pi - \mu, \kappa)$ because the distributions are symmetric around their mean. Shifting $\WN(\mu, \sigma)$ and $\VM(\mu, \kappa)$ by $c$ yields
\begin{align*}
\WN(\mu - c \mod 2 \pi, \sigma)\  \text{ and }\  \VM(\mu - c \mod 2 \pi, \kappa) \ ,
\end{align*}
so the combined operation results in
\begin{align*} \WN&((2 \pi- \mu) - c  \mod 2 \pi, \sigma) \\
 \text{ and }\quad  \VM&((2 \pi- \mu) - c  \mod 2 \pi, \kappa) \ .
\end{align*}

\subsection{Circular Convolution}
\label{sec:operations:convolution}
Given two independent circular random variables $x_1 \sim f_1(x_1), x_2 \sim f_2(x_2)$, the sum $x_1 + x_2$ is distributed according to 
\begin{align*}
(f_1 * f_2)(x) = \int_0^{2\pi} f_1(t) f_2(x-t) \di t \ ,
\end{align*}
where $*$ denotes the convolution. This operation is necessary in the prediction step to incorporate additive noise.

WN distributions are closed under convolution and the new pdf can be obtained just as in the Gaussian case \cite{ACC13_Kurz}, i.e., $\mu = \mu_1 + \mu_2 \mod 2 \pi, \sigma^2 = \sigma_1^2+\sigma_2^2$. VM distributions are not closed under convolution. For this reason, Azmani et al. \cite{azmani2009} used the approximation from \cite{mardia1999}, which is given by $\mu = \mu_1 + \mu_2, \kappa = A^{-1}(A(\kappa_1), A(\kappa_2))$. The function $A(\cdot)$ is the same as defined in Lemma~\ref{lemma:circularmomentmatching}. This approximation can be derived from an intermediate WN representation \cite{stienne2013}. A similar approximation has been used by Markovic et al. for the von Mises-Fisher case \cite[(7)]{markovic2014icra}.

In this paper, we present a more general result that calculates the convolution based on circular moments. 
\begin{lemma}[Moments After Addition of Random Variables]
	\label{lemma:momentsofconvolution}
	Assume independent random variables $x_1 \sim f_1, x_2 \sim f_2$ defined on the circle. For the sum $x = x_1 + x_2$, it holds
	\begin{align*}
	\expect{ \exp(inx) } = \expect{ \exp(inx_1)}  \expect{ \exp(inx_2) } \ .
	\end{align*}
\end{lemma}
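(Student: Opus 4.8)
The plan is to use the independence of $x_1$ and $x_2$ together with the multiplicative property of the complex exponential. The key identity is that $\exp(inx) = \exp(in(x_1 + x_2))$, where the addition is modulo $2\pi$. First I would observe that because $\exp(in\cdot)$ is $2\pi$-periodic for integer $n$, taking the sum modulo $2\pi$ makes no difference: $\exp(in(x_1 +_\mathbb{R} x_2 \bmod 2\pi)) = \exp(in(x_1 +_\mathbb{R} x_2))$. This is the step that makes the circular statement reduce to an ordinary algebraic one, and it is worth stating explicitly since the whole point of working on $\Sone$ is that the group operation carries the modulo.

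Next I would factor the exponential: $\exp(in(x_1 + x_2)) = \exp(inx_1)\exp(inx_2)$. Taking expectations of both sides and invoking independence of $x_1$ and $x_2$ — which transfers to independence of the (measurable, bounded) functions $\exp(inx_1)$ and $\exp(inx_2)$ of them — gives
\begin{align*}
\expect{\exp(inx)} = \expect{\exp(inx_1)\exp(inx_2)} = \expect{\exp(inx_1)}\expect{\exp(inx_2)} \ ,
\end{align*}
which is exactly the claim. One can spell out the expectation as a double integral $\int_0^{2\pi}\int_0^{2\pi} \exp(inx_1)\exp(inx_2) f_1(x_1) f_2(x_2)\,\di x_1\,\di x_2$ using the product density (independence) and then separate the integrals, if a fully elementary argument is preferred over citing the standard fact that expectation factorizes over independent variables.

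I expect no real obstacle here; the statement is essentially the observation that the characteristic function of a sum of independent random variables is the product of the characteristic functions, specialized to integer frequencies and to the circle. The only subtlety worth a sentence is the periodicity remark above, ensuring that reducing modulo $2\pi$ is harmless for $\exp(in\cdot)$; everything else is a one-line computation. It may also be worth noting the immediate corollary that this recovers the known WN convolution rule ($\sigma^2 = \sigma_1^2 + \sigma_2^2$) when $f_1, f_2$ are WN, by plugging the moment formula $m_n^{WN} = \exp(in\mu - n^2\sigma^2/2)$ from the Moments Lemma into the product.
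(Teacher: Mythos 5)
Your proposal is correct and follows essentially the same route as the paper: the paper writes $\expect{\exp(inx)}$ as an integral against the convolution density, changes variables to the joint product density $f_1(x_1)f_2(x_2)$, and separates the two integrals, which is exactly the double-integral computation you sketch. Your explicit remark that the $2\pi$-periodicity of $\exp(in\cdot)$ makes the modulo reduction harmless is a worthwhile addition that the paper leaves implicit.
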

\begin{proof}
	\begin{align*}
	m_n =& \expect{ \exp(inx) } = \int_0^{2\pi} \exp(inx) f(x) \di x \\
	=&\int_0^{2\pi} \int_0^{2\pi} \exp(in(x)) f_1(y) f_2(x-y) \di y  \di x \\
	=&\int_0^{2\pi} \int_0^{2\pi} \exp(in(x_1+x_2)) f_1(x_1) f_2(x_2) \di x_1  \di x_2 \\
	=&\int_0^{2\pi} \exp(inx_1) f_1(x_1) \di x_1 \int_0^{2\pi} \exp(inx_2) f_2(x_2) \di x_2 \\
	=& \expect{ \exp(inx_1)}  \expect{ \exp(inx_2) } 
	\end{align*}
\end{proof}

If moment matching of the first circular moment is used to fit a WN or VM to the density that results from convolution, the solutions for WN and VM distributions from \cite{azmani2009} and \cite{ACC13_Kurz} arise as special cases of Lemma~\ref{lemma:momentsofconvolution}.

\begin{remark}
In fact, Lemma~\ref{lemma:momentsofconvolution} allows us to calculate the convolution purely moment-based. For this reason, we do not need to assume any particular distribution, but can  just calculate the moments of the convolved density based on the products of original moments.\footnote{In general, for example in the case of mixture densities, the complexity of the density increases with each successive convolution, so considering only a finite number of moments constitutes an approximation.}
\end{remark}

\subsection{Multiplication}
\label{sec:operations:multiplication}
Multiplication of pdfs is an important operation for filtering algorithms, because it is required for Bayesian inference. In general, the product of two pdfs is not normalized and thus, not a pdf. For this reason, we consider the renormalized product, which is a valid pdf.

\subsubsection{VM}
\label{sec:operations:multiplication:vm}
Von Mises densities are closed under multiplication \cite{azmani2009}. It holds that $\VM(\mu_1, \kappa_1) \cdot \VM(\mu_2, \kappa_2) \propto \VM(\mu, \kappa)$, where
\begin{align*}
\mu = \atan2(\Im m_1, \Re m_1), \qquad \kappa = |m_1|, \\
\text{with } m_1 = \kappa_1 \exp(i\mu_1) + \kappa_2 \exp(i\mu_2) \ .
\end{align*}

\begin{figure}
\centering
\includegraphics[width=6cm]{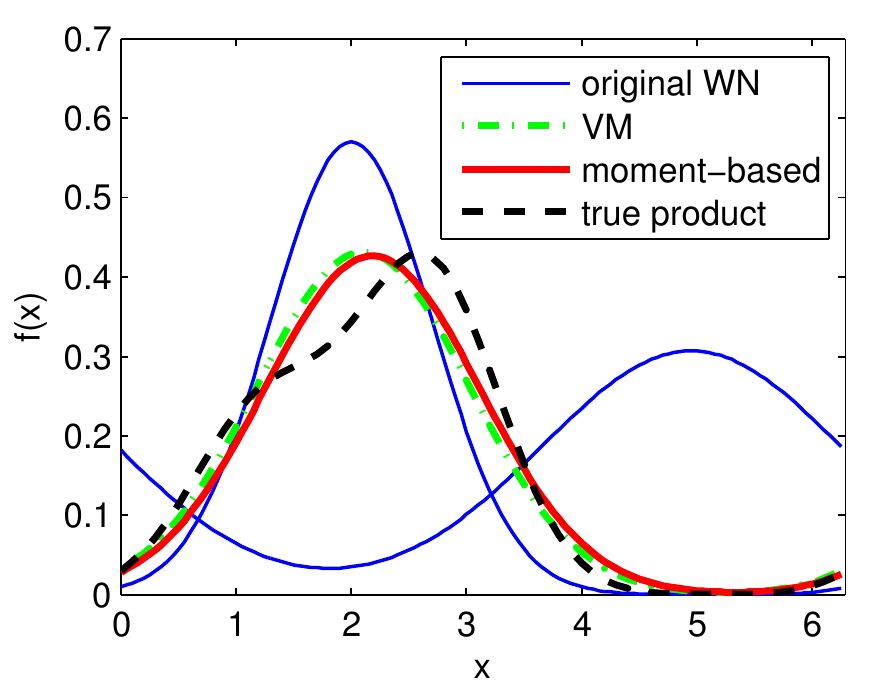}
\caption{Multiplication of two WN densities with parameters $\mu_1 = 2, \sigma_1 = 0.7,$ and \mbox{$\mu_2 = 4.95, \sigma_2 = 1.3$}. The true product and the results of both proposed approximation methods (VM and moment-based) are depicted. Note that the true product is not a WN density.}
\label{fig:wnmulexample}
\end{figure}

\subsubsection{WN}
WN densities are not closed under multiplication. In the following, we consider two different methods to approximate the density of the product with a WN density.
\paragraph{WN via VM}
\label{sec:operations:multiplication:wnvm}
WN densities are not closed under multiplication. In \cite{ACC13_Kurz}, we proposed a method to use the VM distribution in order to approximate the product of two WN densities. More specifically, we convert the WN densities to VM densities using Lemma~\ref{lemma:circularmomentmatching}, multiply according to the VM multiplication formula, and convert back to a WN distribution by applying Lemma~\ref{lemma:circularmomentmatching} again. This method has the disadvantage that, in general, the first circular moment of the resulting WN does not match the first circular moment of the true product. An example can be seen in Fig.~\ref{fig:wnmulexample}.

\paragraph{WN via Moment Matching}
\label{sec:operations:multiplication:wnmoment}
In this paper, we present a new method for approximating the product of WN distributions. This method is based on directly approximating the true posterior moments.

\begin{theorem}
The first circular moment of $\WN (\mu_1, \sigma_1) \cdot \WN(\mu_2, \sigma_2)$ after renormalization is given by
\begin{align*}
m_1 = \frac{
\sum\limits_{j,k=-\infty}^\infty  w(j,k) \int\limits_0^{2 \pi} \exp (ix) \mathcal{N}(x;\mu(j,k),\sigma(j,k)) dx
}{
\sum\limits_{j,k=-\infty}^\infty  w(j,k) \int\limits_0^{2\pi} \mathcal{N}(x;\mu(j,k),\sigma(j,k)) dx
}
\end{align*}
where $\mathcal{N}(x; \mu, \sigma)$ is a one-dimensional Gaussian density with mean $\mu$ and standard deviation $\sigma$, and
\begin{align*}
\mu(j,k) =& \frac{(\mu_1 + 2 \pi j)\sigma_2^2 + (\mu_2 + 2 \pi k)\sigma_1^2}{\sigma_1^2 + \sigma_2^2} \ , \\
\sigma(j,k) =& \sqrt{\frac{\sigma_1^2\sigma_2^2}{\sigma_1^2 + \sigma_2^2}} \ , \\ 
w(j,k) =& \frac{\exp \left( - \frac{1}{2} \frac{((\mu_1 + 2 \pi j) - (\mu_2 + 2 \pi k))^2}{\sigma_1^2 + \sigma_2^2} \right)}{\sqrt{2 \pi (\sigma_1^2 + \sigma_2^2)}}  \ . 
\end{align*}
\end{theorem}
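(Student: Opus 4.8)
\emph{Proof idea.} The plan is to reduce the claim to the classical product-of-Gaussians identity and then to interchange an infinite double summation with integration. First I would write each factor in its defining series form, $\WN(\mu_\ell,\sigma_\ell)=\sum_{j\in\mathbb{Z}}\mathcal{N}(x;\mu_\ell+2\pi j,\sigma_\ell)$ for $\ell\in\{1,2\}$; this expression is $2\pi$-periodic and coincides with the density in the Definition above. Because the summands are non-negative and the series converge to finite values, the pointwise product of the two densities expands term by term into the double series $\sum_{j,k\in\mathbb{Z}}\mathcal{N}(x;\mu_1+2\pi j,\sigma_1)\,\mathcal{N}(x;\mu_2+2\pi k,\sigma_2)$.

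Next I would apply the elementary ``complete the square'' identity: for any means $a,b$, $\mathcal{N}(x;a,\sigma_1)\,\mathcal{N}(x;b,\sigma_2)=w\cdot\mathcal{N}(x;\mu,\sigma)$, where $\mu=(a\sigma_2^2+b\sigma_1^2)/(\sigma_1^2+\sigma_2^2)$, $\sigma=\sqrt{\sigma_1^2\sigma_2^2/(\sigma_1^2+\sigma_2^2)}$, and $w=\exp(-\tfrac12(a-b)^2/(\sigma_1^2+\sigma_2^2))/\sqrt{2\pi(\sigma_1^2+\sigma_2^2)}$. Setting $a=\mu_1+2\pi j$ and $b=\mu_2+2\pi k$ turns these three quantities into exactly $\mu(j,k)$, $\sigma(j,k)$, $w(j,k)$, so the unnormalized product density becomes $g(x)=\sum_{j,k}w(j,k)\,\mathcal{N}(x;\mu(j,k),\sigma(j,k))$. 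Since the renormalized first circular moment is by definition $m_1=\big(\int_0^{2\pi}\exp(ix)\,g(x)\,dx\big)\big/\big(\int_0^{2\pi}g(x)\,dx\big)$, it then only remains to move the integral inside the double sum in numerator and denominator, which yields the stated formula.

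The substantive part of the argument is the convergence bookkeeping that justifies this interchange. For the denominator it is immediate from Tonelli's theorem, since the summands are non-negative and $g$ is bounded on the compact period, hence lies in $L^1([0,2\pi))$. For the numerator I would use dominated convergence: the partial sums over $|j|,|k|\le N$ are bounded in modulus by $g(x)$ uniformly in $N$ because $|\exp(ix)|=1$, and $g\in L^1([0,2\pi))$. I expect this to be the only delicate point, and the one mild subtlety is that $w(j,k)$ depends only on $j-k$ and so decays only in that combination; this is harmless because the remaining factor $\int_0^{2\pi}\mathcal{N}(x;\mu(j,k),\sigma(j,k))\,dx$ is a Gaussian tail that decays rapidly once $\mu(j,k)$ leaves $[0,2\pi)$, so each of the two series in the formula in fact converges absolutely.
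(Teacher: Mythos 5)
Your proposal is correct and follows essentially the same route as the paper: expand each WN density as its defining series of Gaussians, apply the Gaussian product formula to obtain the $w(j,k)\,\mathcal{N}(x;\mu(j,k),\sigma(j,k))$ terms, and interchange summation and integration (the paper likewise invokes dominated convergence for this step). Your additional remarks on Tonelli for the denominator and the absolute convergence of the series are a slightly more careful version of the same bookkeeping.
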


\begin{proof}
The true renormalized product is given by $f(x) = c \cdot f(x;\mu_1,\sigma_1) \cdot f(x;\mu_2,\sigma_2)$, where $c$ renormalizes the product, i.e., 
\[ c = \left( \int_0^{2\pi} f(x;\mu_1,\sigma_1) \cdot f(x;\mu_2,\sigma_2) \di x \right)^{-1} \]

We calculate
\begin{align*}
m_1 =& c \cdot \int_0^{2 \pi} \exp (ix) \cdot f(x;\mu_1,\sigma_1) \cdot f(x;\mu_2,\sigma_2) \di x \\
=& c \cdot \int_0^{2 \pi} \exp (ix) \cdot \sum_{j=-\infty}^\infty \mathcal{N}(x;\mu_1+2\pi j,\sigma_1) \\
&\quad \cdot \sum_{k=-\infty}^\infty \mathcal{N}(x;\mu_2+2\pi k,\sigma_2) \di x \\
=& c \cdot \sum_{j=-\infty}^\infty \sum_{k=-\infty}^\infty \int_0^{2 \pi} \exp (ix) \cdot \mathcal{N}(x;\mu_1+2\pi j,\sigma_1) \\
&\quad  \cdot \mathcal{N}(x;\mu_2+2\pi k,\sigma_2) \di x \\
=& c \cdot \sum_{j=-\infty}^\infty \sum_{k=-\infty}^\infty \int_0^{2 \pi} \exp (ix) \cdot w(j,k) \\
&\quad \cdot \cdot \mathcal{N}(x;\mu(j,k),\sigma(j,k)) \di x \\
=& c \cdot \sum_{j=-\infty}^\infty \sum_{k=-\infty}^\infty \cdot w(j,k) \cdot \int_0^{2 \pi} \exp (ix) \\
&\quad \cdot \mathcal{N}(x;\mu(j,k),\sigma(j,k)) \di x \ ,
\end{align*}
where we use the dominated convergence theorem to interchange summation and integration. We use the abbreviations, 
\begin{align*}
\mu(j,k) =& \frac{(\mu_1 + 2 \pi j)\sigma_2^2 + (\mu_2 + 2 \pi k)\sigma_1^2}{\sigma_1^2 + \sigma_2^2} \ , \\
\sigma(j,k) =& \sqrt{\frac{\sigma_1^2\sigma_2^2}{\sigma_1^2 + \sigma_2^2}} \ , \\ 
w(j,k) =& \frac{\exp \left( - \frac{1}{2} \frac{((\mu_1 + 2 \pi j) - (\mu_2 + 2 \pi k))^2}{\sigma_1^2 + \sigma_2^2} \right) }{\sqrt{2 \pi (\sigma_1^2 + \sigma_2^2)}}
\end{align*}
based on the multiplication formula for Gaussian densities given in \cite[8.1.8]{petersen2012}.

To obtain the renormalization factor $c^{-1}$, we use a similar derivation
\begin{align*}
c^{-1} =& \int_0^{2\pi} f(x;\mu_1,\sigma)_1 \cdot f(x;\mu_2,\sigma_2) \di x \\
=& \int_0^{2\pi} \sum_{j=-\infty}^\infty \mathcal{N}(x;\mu_1+2\pi j,\sigma_1) \\
&\quad \cdot \sum_{k=-\infty}^\infty \mathcal{N}(x;\mu_2+2\pi k,\sigma_2) \di x \\
=& \sum_{j=-\infty}^\infty \sum_{k=-\infty}^\infty \int_0^{2\pi} \mathcal{N}(x;\mu_1+2\pi j,\sigma_1) \\
&\quad \cdot \mathcal{N}(x;\mu_2+2\pi k,\sigma_2) \di x \\
=& \sum_{j=-\infty}^\infty \sum_{k=-\infty}^\infty  w(j,k) \\
&\quad \cdot \int_0^{2\pi} \mathcal{N}(x;\mu(j,k),\sigma(j,k)) \di x  \ . 
\end{align*}

\end{proof}

The involved integrals can be reduced to evaluations of the complex error function erf \cite[7.1]{abramowitz1972}. This yields
\begin{align*}
&\int_0^{2 \pi} \exp (ix) \cdot \mathcal{N}(x;\mu(j,k),\sigma(j,k)) \di x \\
=& \frac{1}{2} \exp \left( i \mu(j,k) - \frac{\sigma(j,k)^2}{2} \right) \\
&\cdot \left( \erf \left( \frac{\mu(j,k)+i\sigma(j,k)^2}{\sqrt{2} \sigma(j,k)} \right) \right. \\
&- \left. \erf \left( \frac{\mu(j,k)-2\pi + i\sigma(j,k)^2}{\sqrt{2}\sigma(j,k)} \right)  \right)
\end{align*}
and
\begin{align*}
&\int_0^{2 \pi} \mathcal{N}(x;\mu(j,k),\sigma(j,k)) \di x \\ 
&= \frac{1}{2} \left( \erf \left(\frac{\mu(j,k) }{\sigma(j,k) \sqrt{2}} \right) - \erf \left( \frac{\mu(j,k) - 2\pi}{\sigma(j,k) \sqrt{2}} \right) \right) \ .
\end{align*}
There are efficient implementations of the complex error function by means of the related Faddeeva function \cite{johnson2012}. Furthermore, the infinite sums can be truncated to just a few summands without a significant loss in accuracy. For example, the multiplication in Fig.~\ref{fig:wnmulexample} requires $5 \times 5$ summands for an error smaller than the accuracy of the IEEE 754 64 bit double data type \cite{goldberg1991}. Consequently, the proposed method allows for efficient calculation of the approximate multiplication of WN densities.

\section{Circular Filtering}
\label{sec:circularfiltering}
Based on the results in the previous section, we derive circular filtering algorithms for the scenarios described in Sec.~\ref{sec:problemformulation:circularfiltering}. All proposed algorithms follow the recursive filtering concept and consist of prediction and measurement update steps. We formulate the necessary steps without requiring a particular density whenever possible such that most methods can be directly applied to WN as well as VM distributions, and might even be generalized to other continous circular distributions. An overview of all considered prediction and measurement update algorithms is given in Table~\ref{table:predictionupdate}.

\begin{table*}
\fontsize{7pt}{1pt}\selectfont
\centering
\begin{tabular}{lcccccc}
\toprule
& \multicolumn{3}{c}{\bf system model} &  \multicolumn{3}{c}{\bf measurement model} \\
\cmidrule(r){2-4}
\cmidrule(r){5-7}
\bf method & \bf identity & \bf additive noise & \bf arbitrary noise & \bf identity & \bf additive noise & \bf arbitrary noise \\
\midrule
WN & \cite{ACC13_Kurz} (special case) & \cite{ACC13_Kurz} & contribution & \cite{ACC13_Kurz} & \cite{ACC14_Kurz} & \cite{ACC14_Kurz} \\
VM & \cite{azmani2009} & contribution & contribution & \cite{azmani2009} & contribution & contribution \\
moment-based & contribution & contribution & contribution & - & - & - \\
\bottomrule
\end{tabular}
\caption{Prediction and measurement update algorithms. Entries marked with \emph{contribution} are contributions of this paper.}
\label{table:predictionupdate}
\end{table*}

\subsection{Prediction}
The prediction step is used to propagate the estimate through time.

\subsubsection{Identity System Model}
The transition density is given according to
\begin{align*}
f(x_{k+1}|x_k) &= \int_0^{2 \pi} f(x_{k+1},w_k|x_k) \di w_k \\
&= \int_0^{2 \pi} f(x_{k+1}|x_k,w_k) f^w(w_k) \di w_k \\
&= \int_0^{2 \pi} \delta(x_{k+1}- (x_k+ w_k)) f^w(w_k) \di w_k \\
&= f^w(x_{k+1}-x_k) \ , 
\end{align*}
where $f^w(\cdot)$ is the density of the system noise. For the predicted density, according to the Chapman-Kolmogorov equation we obtain
\begin{align}
\label{eq:chapman}
f^p(x_{k+1}) &= \int_0^{2\pi} f(x_{k+1}|x_k) f^e(x_k) \di x_k \ .
\end{align}
In the special case of an identity system model, this yields
\begin{align}
f^p(x_{k+1}) &= \int_0^{2\pi} f^w(x_{k+1}-x_k) f^e(x_k) \di x_k \nonumber \\
&= (f^w * f^e ) (x_{k+1}) \nonumber \ ,
\end{align}
where $*$ denotes convolution as defined in Sec.~\ref{sec:operations:convolution}. For the VM distribution, this system model has been considered in~\cite{azmani2009}. If a WN  distribution is assumed, (\ref{eq:chapman}) is a special case of \cite{ACC13_Kurz} where we omit the propagation through the nonlinear function.

\subsubsection{Nonlinear System Model with Additive Noise}

Similar to the previous case, the transition density is given by
\begin{align*}
f(x_{k+1}|x_k)&= \int_0^{2 \pi} \delta(x_{k+1}- (a_k(x_k) + w_k)) f^w(w_k) \di w_k \ . 
\end{align*}
We approximate the prior density $f^e(x_k) \approx \sum_{j=1}^L \gamma_j \delta(x_k - \beta_j)$ using, for example, Algorithm~\ref{algo:3diracs} or Algorithm~\ref{algo:5diracs}. Then, the prediction density can be approximated according to
\begin{align*}
&f^p(x_{k+1}) \\
&= \int_0^{2\pi} f(x_{k+1}|x_k) f^e(x_k) \di x_k \displaybreak[3] \\
&\approx \int_0^{2\pi} \left( \int_0^{2 \pi} \delta(x_{k+1}- (a_k(x_k)+ w_k)) f^w(w_k) \di dw_k \right) \\
&\quad \cdot \left( \sum_{j=1}^L \gamma_j \delta(x_k - \beta_j) \right) \di x_k \\
&= \int_0^{2\pi} f^w(w_k) \sum_{j=1}^L \gamma_j \int_0^{2 \pi} \delta(x_{k+1}- (a_k(x_k)+ w_k)) \\
&\quad \cdot \delta(x_k - \beta_j) \di x_k \di w_k \\
&= \int_0^{2\pi} f^w(w_k) \underbrace{\sum_{j=1}^L \gamma_j \delta(x_{k+1}- (a_k(\beta_j)+ w_k))}_{\approx\tilde{f}(x_{k+1}-w_k)} \di w_k \\
&= (f^w * \tilde{f})(x_{k+1}) \ ,
\end{align*}
where $\tilde{f}$ is obtained by moment matching of the WD $\sum_{j=1}^L \gamma_j \delta(x - (a_k(\beta_j))$ according to Lemma~\ref{lemma:circularmomentmatching}. The convolution can be calculated as described in Sec.~\ref{sec:operations:convolution}. 

\subsubsection{Nonlinear System Model with Arbitrary Noise}
In this paper, we extend the previous results to deal with arbitrary noise in the prediction step. 

For arbitrary noise, the transition density is given by
\begin{align*}
f(x_{k+1}|x_k)&= \int_0^{2 \pi} \delta(x_{k+1}- a_k(x_k,w_k)) f^w(w_k) \di w_k \ .
\end{align*}
We approximate the prior density $f^e(x_k) \approx \sum_{j=1}^L \gamma_j \delta(x_k - \beta_j)$ as well as the noise density $f^w(w_k) \approx \sum_{l=1}^{L^w} \gamma^w_l \delta(w_k - \beta^w_l)$. It should be noted that the noise is not necessarily a circular quantity and different approximation techniques may be required. If $W=\mathbb{R}^n$, the techniques presented in \cite{MFI08_Hanebeck-LCD} may be applied. Then, the prediction density can be approximated according to
\begin{align*}
&f^p(x_{k+1}) \\
&= \int_0^{2\pi} f(x_{k+1}|x_k) f^e(x_k) \di x_k \\
&= \int_0^{2\pi} \int_W \delta(x_{k+1}- a_k(x_k,w_k)) f^w(w_k) \di w_k f^e(x_k) \di x_k \\
&\approx \int_0^{2\pi} \int_W \delta(x_{k+1}- a_k(x_k,w_k))\left(\sum_{l=1}^{L^w} \gamma^w_l \delta(w_k - \beta^w_l)\right) \\
&\quad \di w_k \ f^e(x_k) \di x_k \\
&= \sum_{l=1}^{L^w} \gamma^w_l  \int_0^{2 \pi} \delta(x_{k+1}- a_k(x_k,\beta^w_l)) f^e(x_k) \di x_k \\
&\approx \sum_{l=1}^{L^w} \gamma^w_l  \int_0^{2 \pi} \delta(x_{k+1}- a_k(x_k,\beta^w_l)) \\
&\quad \cdot  \left(\sum_{j=1}^L \gamma_j \delta(x_k - \beta_j)\right) \di x_k \\
&= \sum_{j=1}^L \sum_{l=1}^{L^w} \gamma_j \gamma^w_l \delta(x_{k+1}- a_k(\beta_j,\beta^w_l)) 
\end{align*}
A continuous density can be fitted to this result by circular moment matching. The algorithm is given in Algorithm~\ref{algo:predcitionarbitrarynoise}. It is worth noting that this algorithm can be executed based purely on the circular moments of $f(x_k)$ and $f^w(w_k)$ if the deterministic sampling scheme only depends on these moments. In that case, we do not necessarily need to fit a distribution $f(x_{k+1})$ to the resulting circular moments, but can just store the estimate by retaining those circular moments.

\begin{algorithm}
\KwIn{prior density $f(x_k)$, system noise density $f^w(w_k)$, system function $a_k(\cdot, \cdot)$}
\KwOut{predicted density $f(x_{k+1})$}
\tcc{sample prior density and noise density}
$(\gamma_1, \dots, \gamma_L, \beta_1, \dots, \beta_L) \gets$ sampleDeterm$(f(x_k))$\;
$(\gamma_1^w, \dots, \gamma_{L^w}^w, \beta_1^w, \dots, \beta_{L^w}^w) \gets$ sampleDeterm$(f^w(w_k))$\;
\tcc{obtain Cartesian product and propagate}
\For{$j \gets 1$ \KwTo $L$}{
	\For{$l \gets 1$ \KwTo $L^w$}{
		$\gamma_{j+L(l-1)}^p \gets \gamma_j \cdot \gamma_l^w$\;
		$\beta_{j+L(l-1)}^p \gets a_k(\beta_j, \beta_l^w)$\;
	}
}
\tcc{obtain posterior density}
$f(x_{k+1}) \gets$ momentMatching$(\gamma_1^p, \dots, \gamma_{L\cdot L^w}^p, \beta_1^p, \dots, \beta_{L \cdot L^w}^p))$\;
\caption{Prediction with arbitrary noise.}
\label{algo:predcitionarbitrarynoise}
\end{algorithm}

\subsection{Measurement Update}
The measurement update step fuses the current estimate with a measurement that was obtained according to the measurement equation.

\subsubsection{Identity Measurement Model}
In the case of the identity measurement model and additive noise, the measurement likelihood is given by
\begin{align*}
f(z_k | x_k) = f^v ( z_k - x_k) \ .
\end{align*}
For the posterior density, application of Bayes' theorem yields
\begin{align*}
f(x_k | z_k ) = \frac{f(z_k | x_k) f(x_k) }{f(z_k)} \propto f(z_k | x_k) f(x_k)  \ ,
\end{align*}
where $f(x_k)$ is the prior density. Thus, we obtain the posterior density
\begin{align*}
f(x_k | z_k ) \propto f^v ( z_k - x_k) f(x_k) \ ,
\end{align*}
as the product of the prior density and $f^v ( z_k - x_k)$, which can be obtained as described in  Sec.~\ref{sec:operations:shifting}. The multiplication depends on the assumed probability density and can be performed using the multiplication formulas given in Sec.~\ref{sec:operations:multiplication}. For the VM case, this is equivalent to the measurement update from \cite{azmani2009} and for the WN case, this is equivalent to the measurement update from \cite{ACC13_Kurz}.


\subsubsection{Nonlinear Model with Additive Noise}
For a nonlinear measurement function with additive noise, the measurement likelihood is calculated according to
$f(\vecz_k | x_k) = f^v ( \vecz_k - h_k(x_k)),$
as given in \cite{ACC14_Kurz}. The remainder of the measurement update step is identical to the case of arbitrary noise as described in the following section.

\subsubsection{Nonlinear Model with Arbitrary Noise}
For a nonlinear update with arbitrary noise, we assume that the likelihood is given. The key idea is to approximate the prior density $f(x_k)$ with a WD mixture and reweigh the components according to the likelihood $f(z_k | x_k)$. However, this can lead to degenerate solutions, i.e., most or all weights are close to zero, if the likelihood function is narrow. 

We have shown in \cite{ACC14_Kurz} that a progressive solution as introduced in \cite{Fusion13_Hanebeck} can be used to avoid this issue. For this purpose, we formulate the likelihood as a product of likelihoods
\begin{align*}
f(\vecz_k | x_k) = f(\vecz_k | x_k)^{\lambda_1} \cdot \ldots \cdot f(\vecz_k | x_k)^{\lambda_s} \ ,
\end{align*}
where $\lambda_1, \dots, \lambda_s > 0$ and $\sum_{j=1}^s \lambda_j = 1$. This decomposition of the likelihood allows us to perform the measurement update step gradually by performing $s$ partial update steps. Each update step is small enough to prevent degeneration and we obtain a new sample set after each step, to ensure that the differences between the sample weights stay small. In order to determine $\lambda_1, \dots, \lambda_s$ and $s$, we require that the quotient between the smallest weight $\gamma_\text{min}$ and the largest weight $\gamma_\text{max}$ after reweighing is not below a certain threshold $R \in (0,1)$, i.e., $\frac{\gamma_\text{min}}{\gamma_\text{max}} \geq R$. Using the conservative bounds 
\begin{align*}
\gamma_\text{min} &\geq \min_{j=1,\dots, L} (\gamma_j^n) \cdot \min_{j=1, \dots, L} (f(\vecz_k| \beta_j^n)) \ , \\
\gamma_\text{max} &\leq \max_{j=1,\dots, L} (\gamma_j^n) \cdot \max_{j=1, \dots, L} (f(\vecz_k| \beta_j^n)) \ ,
\end{align*}
this leads to the condition
\begin{align*}
\lambda_n \leq \frac{\log \left( R \cdot \frac{\max\limits_{j=1,\dots, L} (\gamma_j^n)}{\min\limits_{j=1,\dots, L} (\gamma_j^n)} \right)}{\log \left( \frac{\min\limits_{j=1,\dots, L}  f(\vecz_k| \beta_j^n)}{\max\limits_{j=1,\dots, L} f(\vecz_k| \beta_j^n)} \right)} \ ,
\end{align*}
where $\WD(\gamma_1^n, \dots, \gamma_L^n, \beta_1^n, \dots, \beta_L^n)$ is the deterministic approximation at $n$-th progression step\footnote{Compared to \cite{ACC14_Kurz}, we extend the progressive scheme to handle discrete approximations with non-equally weighted components.}. The progression continues until $\sum_{n}\lambda_n = 1$. This method can be applied in conjunction with WN as well as VM distributions (see Algorithm~\ref{algo:progressive}).

\begin{algorithm}[htb]
	\KwIn{measurement $\hat{\vecz}_k$, likelihood $f(\vecz_k|\vecx_k)$, predicted density $f^p(x_k)$ as WN or VM, threshold parameter $R$}
	\KwOut{estimated fensity as WN or VM}
	$s \gets 0$\;
	$f(x_k) \gets f^p(x_k)$\;
	\While{$\sum_{n=1}^{s} \lambda_n < 1$}{
		$s \gets s + 1$\;
		\tcc{deterministic sampling from WN or VM density (Sec.~\ref{sec:deterministic})}
		$(\gamma_1, \dots, \gamma_L, \beta_1, \dots, \beta_L) \gets$ sampleDeterm$(f(x_k))$\;
		\tcc{calculate size of progression step}
		$\lambda_s \gets \min \left( 1 - \sum_{n=1}^{s-1} \lambda_n ,  \frac{\log\left( R \cdot \frac{\max\limits_{j=1,\dots, L} (\gamma_j)}{\min\limits_{j=1,\dots, L} (\gamma_j)}  \right)}{\log \left( \frac{\min\limits_{j=1,\dots, L}  f(\vecz_k| \beta_j)}{\max\limits_{j=1,\dots, L} f(\vecz_k| \beta_j)} \right) } \right)$\;
		\tcc{execute progression step}
		\For{$j \gets 1$ \KwTo $L$}{
			$\gamma_j \gets \gamma_j \cdot  f(\hat{\vecz}_k | \beta_j)^{\lambda_s}$\;
		}
		\tcc{use moment matching to obtain WN or VM density}
		$f \gets $momentMatching$(\gamma_1, \dots, \gamma_L, \beta_1, \dots, \beta_L)$\;
	}
	\caption{Progressive measurement update for arbitrary noise.}
	\label{algo:progressive}
\end{algorithm}

\section{Evaluation}
\label{sec:evaluation}
\subsection{Propagation Accuracy}

\def\movetop#1{\raisebox{-\height+\baselineskip}{#1}}
\begin{figure*}[ht]
\centering
\begin{tabular}{ccc}
\movetop{\includegraphics[width=50mm]{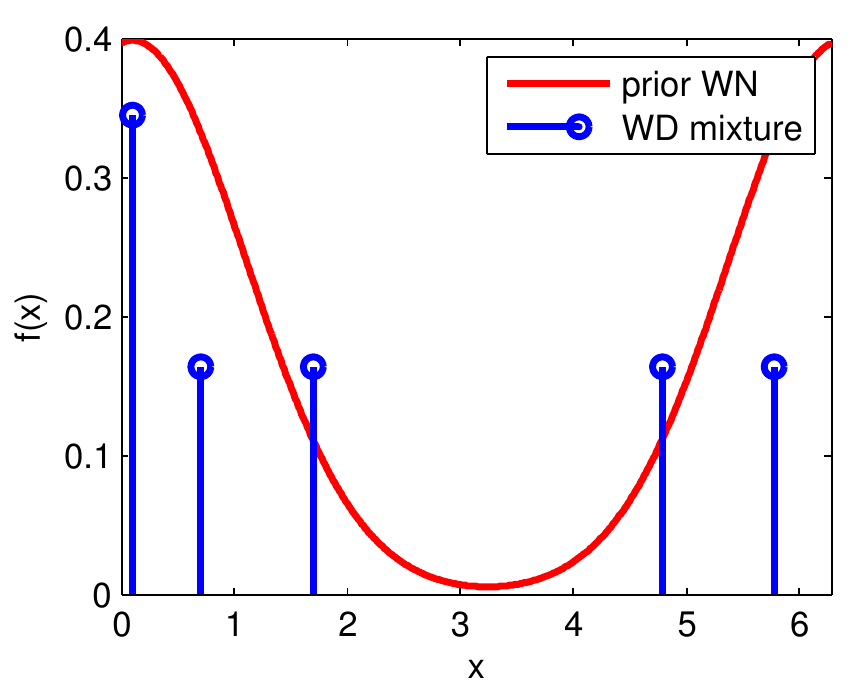}} & 
\movetop{$\substack{\includegraphics[width=3cm]{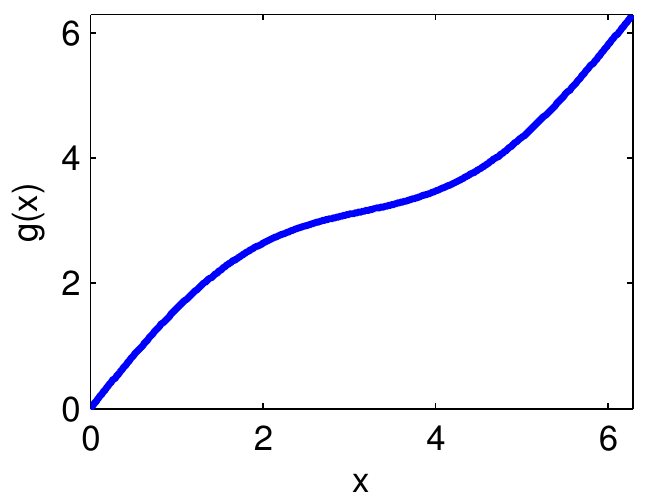} \\ \xrightarrow{\hspace*{5cm}} \\ g(x)}$} & 
\movetop{\includegraphics[width=50mm]{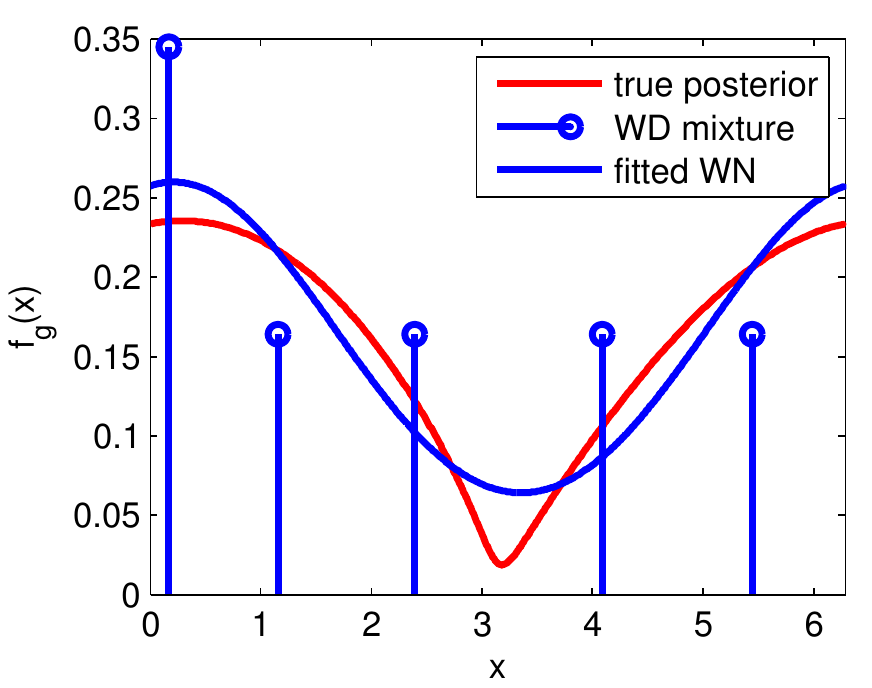}}
\\
prior & $g(x)= x + c \sin(x) \mod 2 \pi$ & posterior
\end{tabular}
\caption{Propagation of a WN distribution with parameters $\mu=0.1, \sigma=1$ through a nonlinear function $g$ by means of a deterministic WD approximation with five components. In this example, we use $c=0.7$.}
\label{fig:propagationidea}
\end{figure*}

\newcommand{\propagationwidth}{45mm}
\begin{figure*}[htb]
	\centering
	\begin{tabular}{ccc}
	$\sigma = 0.5$ & $\sigma = 1.0$ & $\sigma = 1.5$ \\
	\includegraphics[width=\propagationwidth]{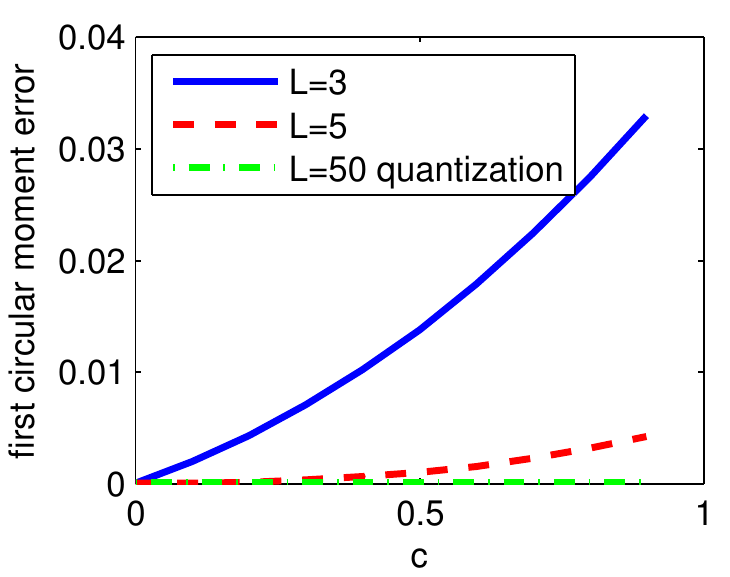} & \includegraphics[width=\propagationwidth]{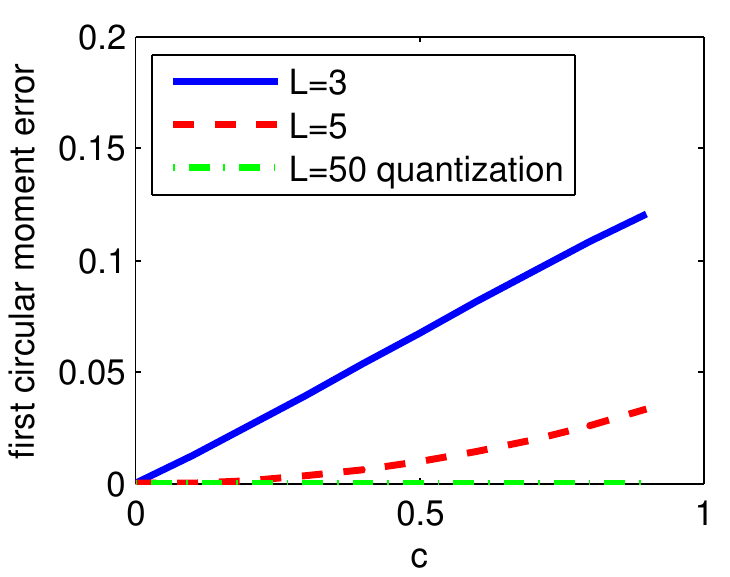} & 
	\includegraphics[width=\propagationwidth]{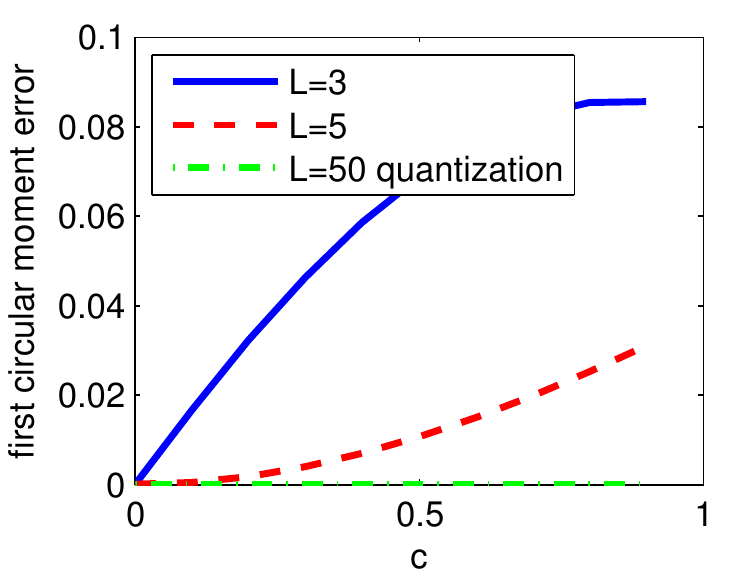} \\
	\includegraphics[width=\propagationwidth]{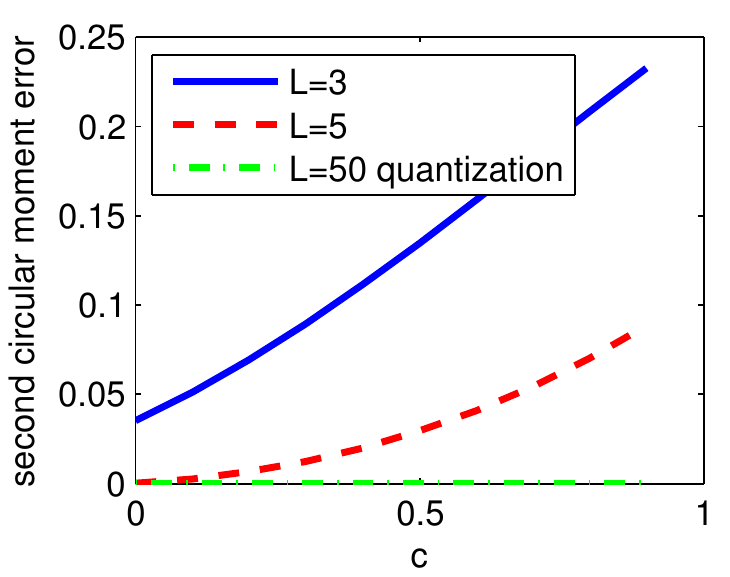} & \includegraphics[width=\propagationwidth]{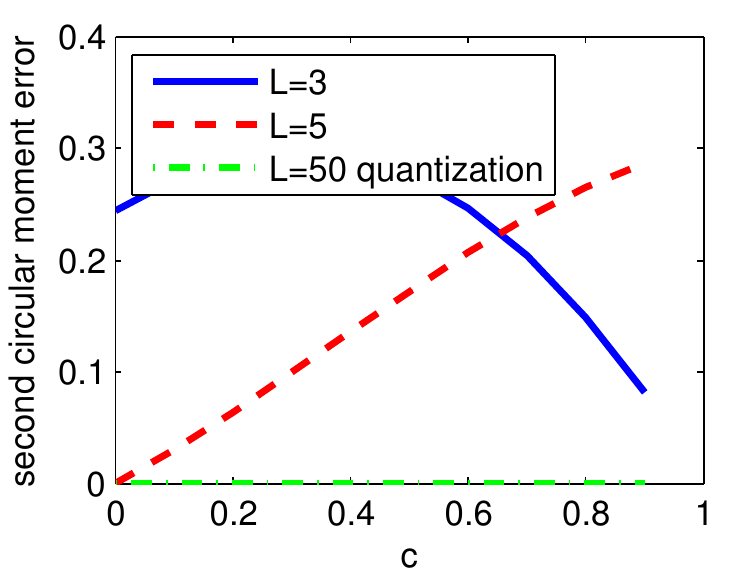} & 
	\includegraphics[width=\propagationwidth]{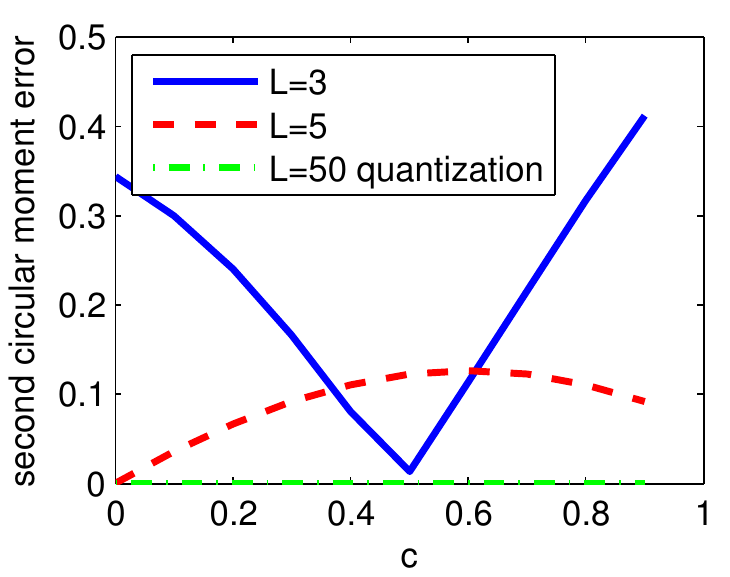} \\
	\includegraphics[width=\propagationwidth]{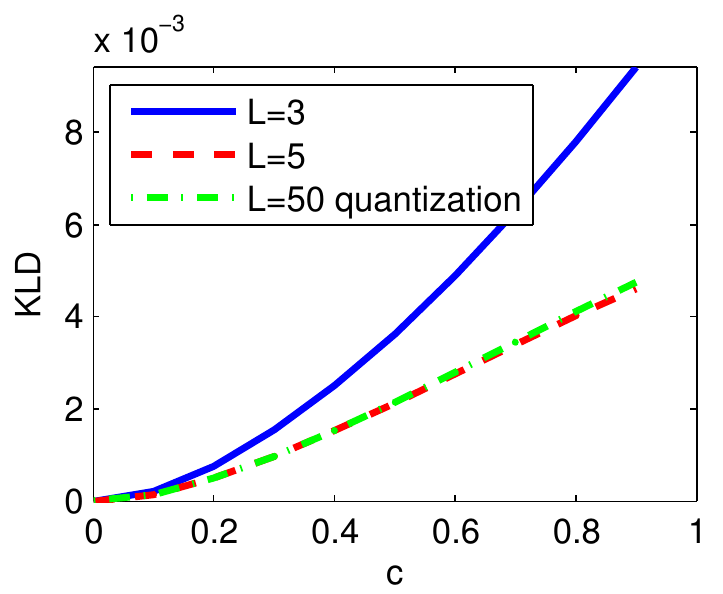} & \includegraphics[width=\propagationwidth]{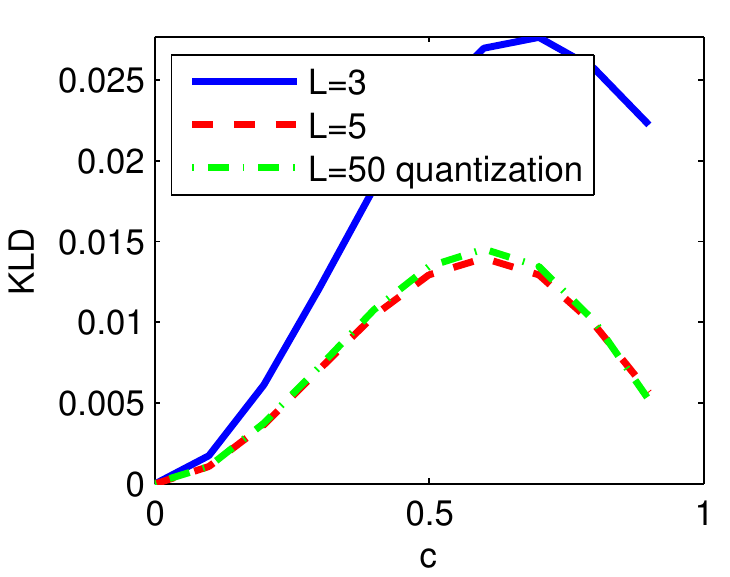} & 
	\includegraphics[width=\propagationwidth]{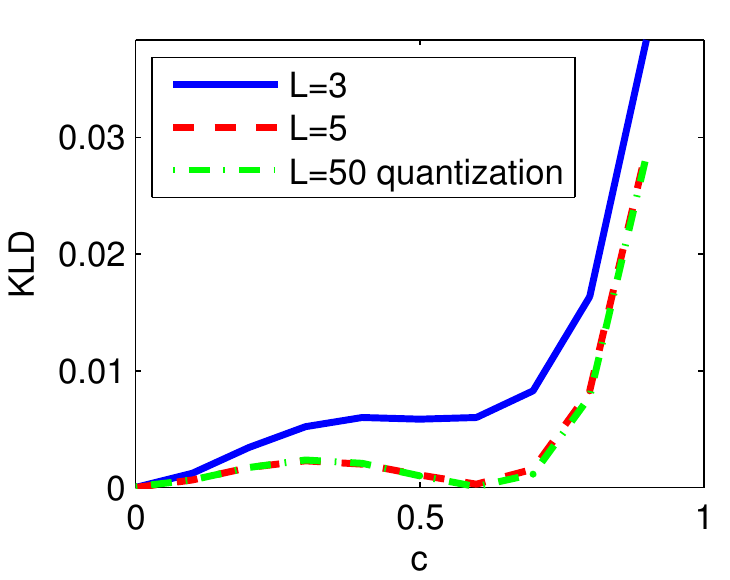} \\
	\end{tabular}
	\caption{Propagation of $\WN(0, \sigma)$ through a nonlinear function with nonlinearity parameter $c$.}
	\label{fig:propagationresults}
\end{figure*}

In order to evaluate the deterministic sampling as introduced in Sec.~\ref{sec:deterministic}, we investigate the accuracy when performing propagation through the nonlinear function $g: \Sone \to \Sone$, 
\begin{align*}
g(x) = x + c\times_\mathbb{R} \sin(x) \mod 2 \pi \ ,
\end{align*}
where $c\in [0,1)$ is a parameter controlling the strength of the nonlinearity and $\times_\mathbb{R}$ refers to multiplication in the field of real numbers $\mathbb{R}$. Furthermore, we consider the density $\WN(0, \sigma)$ that we want to propagate through $g(\cdot)$. For this purpose, we sample $\WN(0, \sigma)$ deterministically using the methods described in Sec.~\ref{sec:deterministic} and obtain $\WD(\gamma_1, \dots, \gamma_L, \beta_1, \dots, \beta_L)$. Then, we apply $g(\cdot)$ componentwise, which yields $\WD(\gamma_1, \dots, \gamma_L, g(\beta_1), \dots, g(\beta_L))$. 

The true posterior is given by 
\begin{align*}
f^{true}(x) = \frac{f (g^{-1}(x); \mu, \sigma)}{g'(x)}
\end{align*}
and can only be calculated numerically. We evaluate the first and the second circular moment $m^{WD}_i, \ i=1,2$ of the resulting WD distribution and compare to the first and the second circular moment $m^{true}_i, \ i=1,2$ of the true posterior, which is obtained by numerical integration\footnote{Numerical integration produces very accurate results in this case, but is too slow for use in practical filtering applications.}. The considered error measure is given by $|m^{WD}_i - m^{true}_i|, \  i=1,2$, where $|\cdot|$ is the Euclidean norm in the complex plane. Additionally, we fit a WN density to the posterior WD by circular moment matching and numerically calculate the Kullback-Leibler divergence
\begin{align}
\label{eq:kld}
D_{\text{KL}} \left(f^{true} || f^{fitted} \right) = \int_0^{2 \pi} f^{true}(x) \log \left( \frac{f^{true}(x)}{f^{fitted}(x)} \right) \di x \ ,
\end{align}
between the true posterior and the fitted WN. The Kullback-Leibler divergence is an information theoretic measure to quantify the information loss when approximating $f^{true}$ with $f^{fitted}$. This concept is illustrated in Fig.~\ref{fig:propagationidea}.

The results for different values of $\sigma$ are depicted in Fig.~\ref{fig:propagationresults}. We compare several samplers, the analytic methods with $L=3$ components (Algorithm~\ref{algo:3diracs}) and $L=5$ components (Algorithm~\ref{algo:5diracs}, parameter $\lambda=0.5$) from Sec.~\ref{sec:deterministic:analytic} as well as the quantization approach discussed in Sec.~\ref{sec:deterministic:optimization}. It can be seen that the analytic solution with $L=5$ components is significantly better than the solution with $L=3$ components. The quantization-based solution is computationally quite demanding but gives almost optimal results. However, the analytic solution with $L=5$ components has comparable performance in terms of the Kullback-Leibler divergence even though the posterior moments are not calculated as accurately.

\subsection{Moment-Based WN Multiplication}
In this evaluation, we compare the two methods for WN multiplication given in Sec.~\ref{sec:operations:multiplication:wnvm} and Sec.~\ref{sec:operations:multiplication:wnmoment}. For two WN densities $\WN(\mu_1, \sigma_1)$ and $\WN(\mu_2, \sigma_2)$, we calculate the true product $f^{true} = \WN(\mu_1, \sigma_1) \cdot \WN(\mu_2, \sigma_2)$ and compare it to the WN approximation $f^{fitted}$.

In order to determine the approximation quality, we compute the Kullback-Leibler divergence (\ref{eq:kld}). Furthermore, we consider the $L^2$ distance
\begin{align*}
D_{L^2}\left( f^{true}, f^{fitted} \right) =  \int_0^{2 \pi} \left( f^{true}(x) - f^{fitted}(x) \right)^2 \di x \ .
\end{align*}
The results for different values of $\sigma_1, \mu_2,$ and $\sigma_2$ are depicted in Fig.~\ref{fig:wnmulkld} and Fig.~\ref{fig:wnmull2}. We keep $\mu_1$ fixed because only the difference between $\mu_2$ and $\mu_1$ affects the result. Multiplication is commutative, so we consider different sets of values for $\sigma_1$ and $\sigma_2$ to avoid redundant plots. 

As can be seen, the moment-based approach derived in Sec.~\ref{sec:operations:multiplication:wnmoment} significantly outperforms the approach from Sec.~\ref{sec:operations:multiplication:wnvm} in almost all cases according to both distance measures. The new approach is particularly superior for small uncertainties.

\newcommand{\kldwidth}{42mm}
\begin{figure*}[htb]
	\centering
	\begin{tabular}{cccccc}
	& $\sigma_1 = 0.1$ & $\sigma_1 = 0.4$ & $\sigma_1 = 1.0$ \\
	\rotatebox{90}{\qquad $\sigma_2 = 0.2$} & \includegraphics[width=\kldwidth]{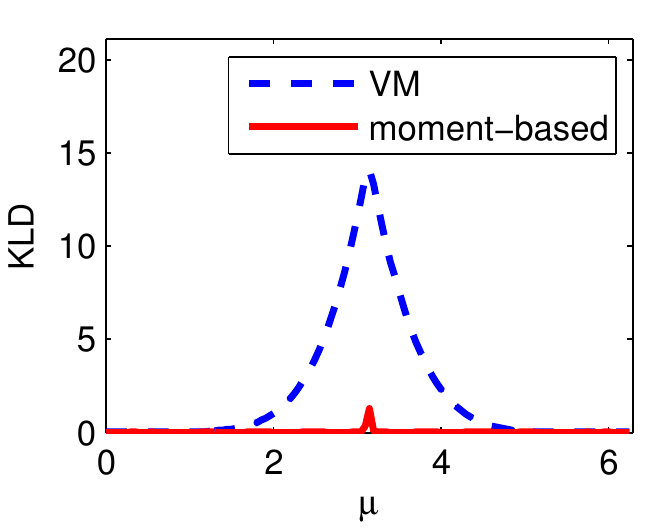} &  \includegraphics[width=\kldwidth]{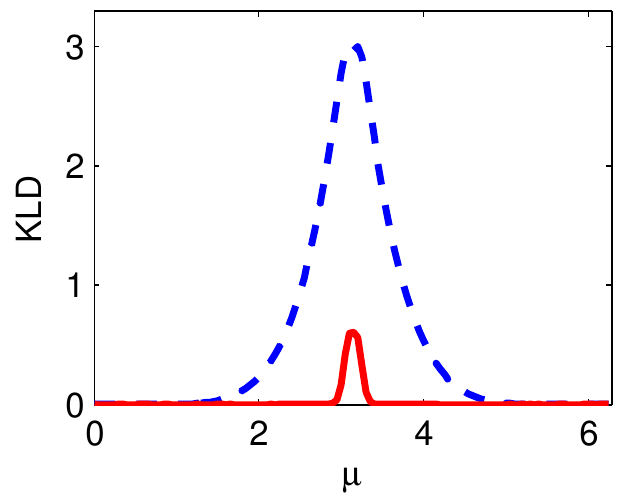} &  
	\includegraphics[width=\kldwidth]{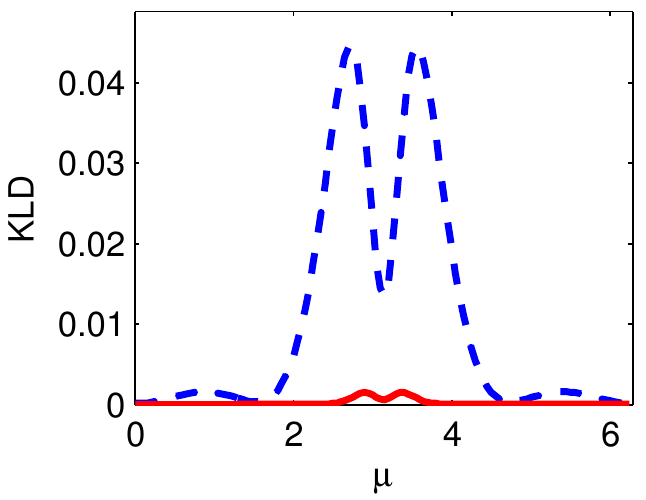} \\
	\rotatebox{90}{\qquad $\sigma_2 = 0.5$} & \includegraphics[width=\kldwidth]{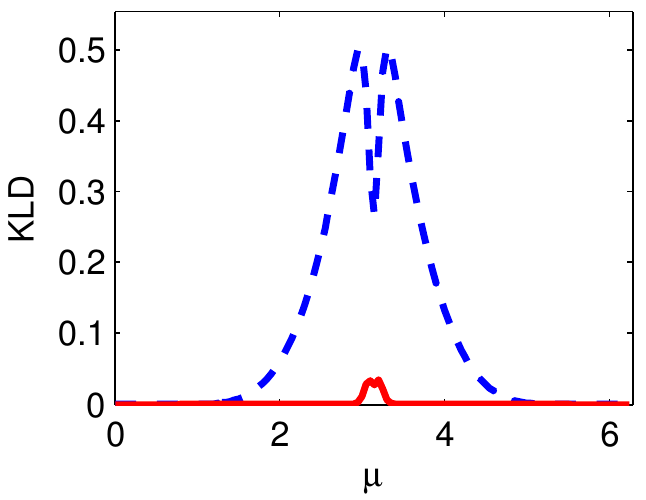} &  \includegraphics[width=\kldwidth]{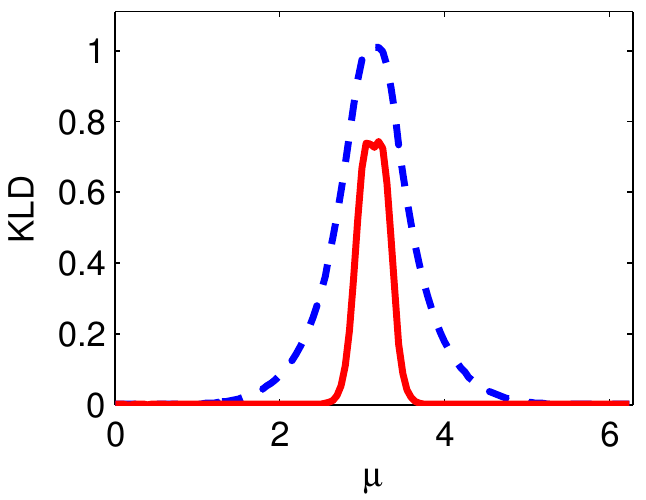} &  
	\includegraphics[width=\kldwidth]{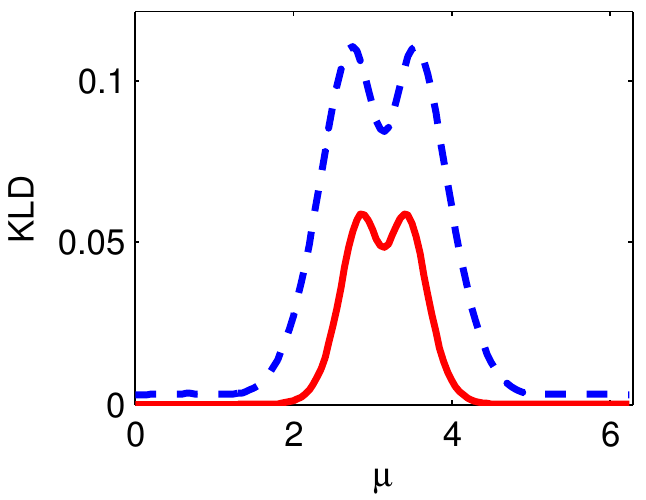} \\
	\rotatebox{90}{\qquad $\sigma_2 = 1.0$} & \includegraphics[width=\kldwidth]{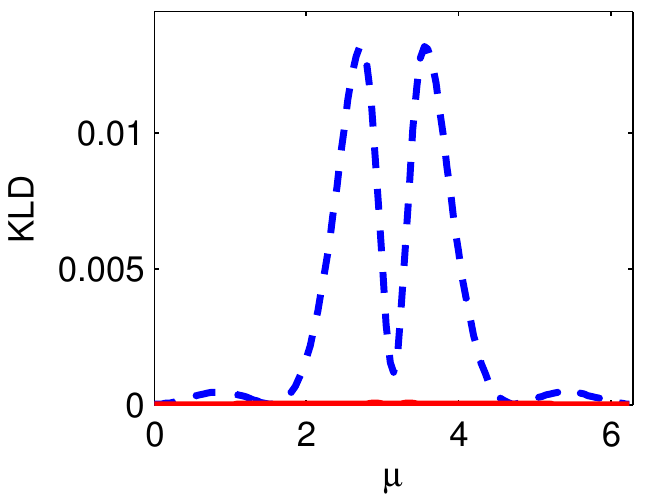} &  \includegraphics[width=\kldwidth]{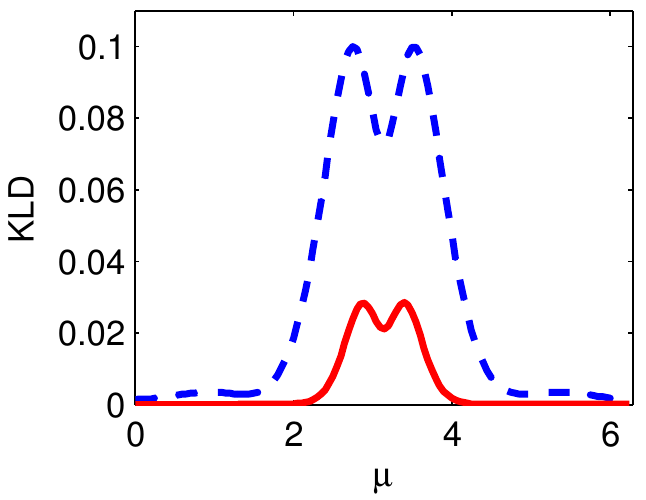} &  
	\includegraphics[width=\kldwidth]{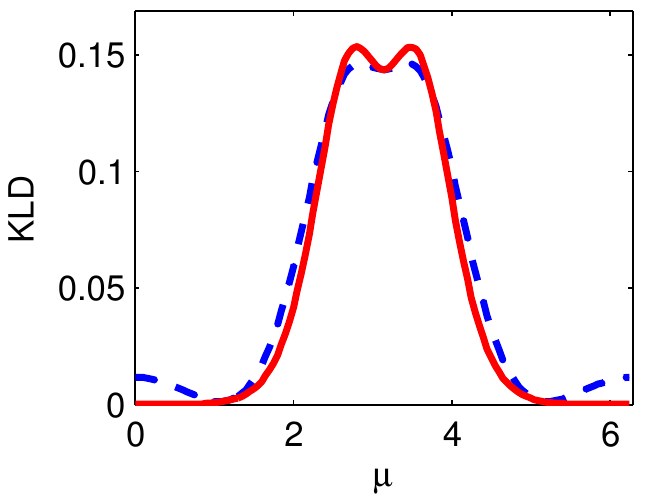} \\
	\end{tabular}
	\caption{Kullback-Leibler divergence between the true product of WN densities and the proposed approximations.}
	\label{fig:wnmulkld}
\end{figure*}

\begin{figure*}[htb]
	\centering
	\begin{tabular}{ccccccc}
	& $\sigma_1 = 0.1$ & $\sigma_1 = 0.4$ & $\sigma_1 = 1.0$ \\
	\rotatebox{90}{\qquad $\sigma_2 = 0.2$} & \includegraphics[width=\kldwidth]{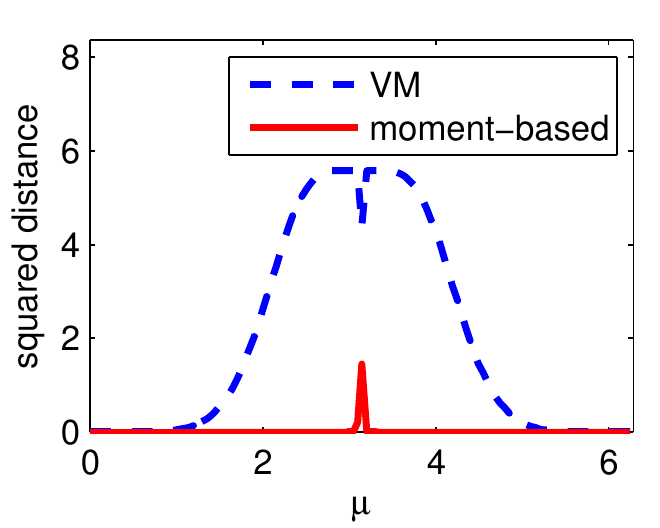} &  \includegraphics[width=\kldwidth]{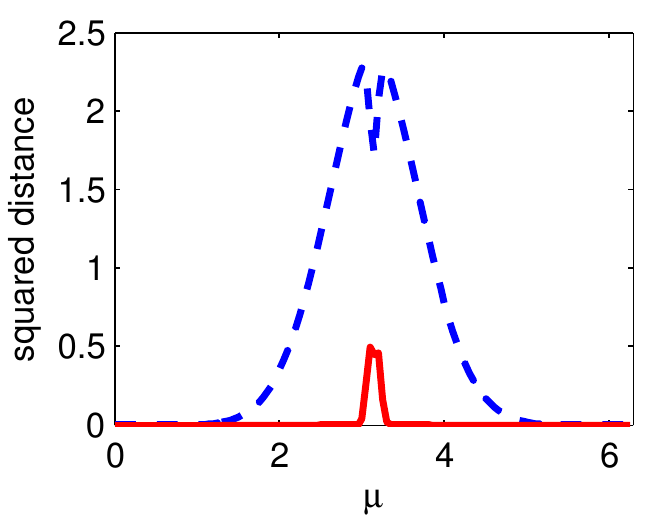} &  
	\includegraphics[width=\kldwidth]{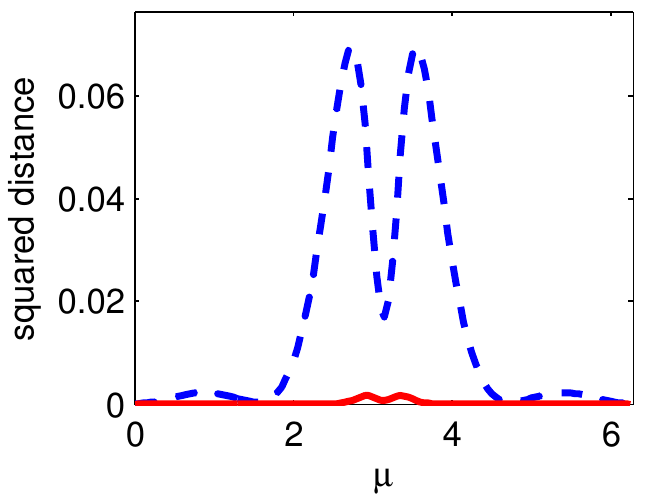} \\
	\rotatebox{90}{\qquad $\sigma_2 = 0.5$} & \includegraphics[width=\kldwidth]{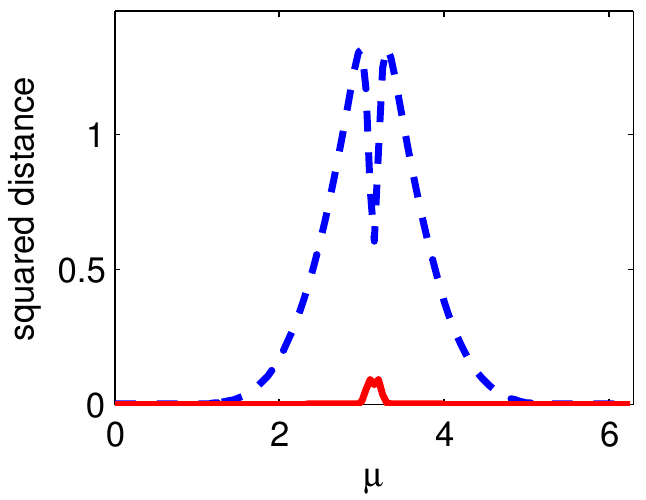} &  \includegraphics[width=\kldwidth]{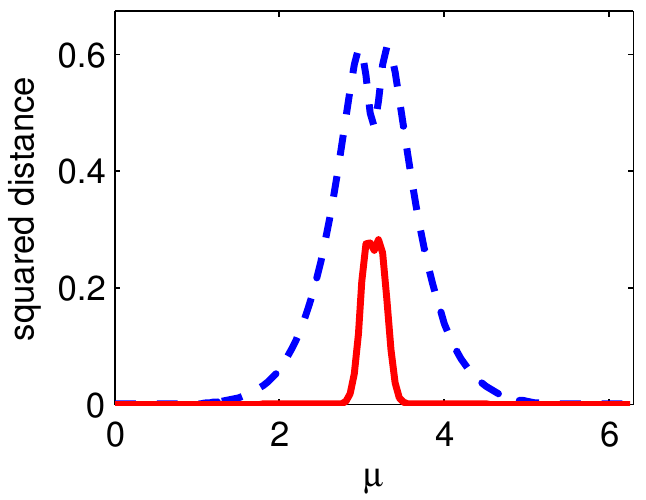} &  
	\includegraphics[width=\kldwidth]{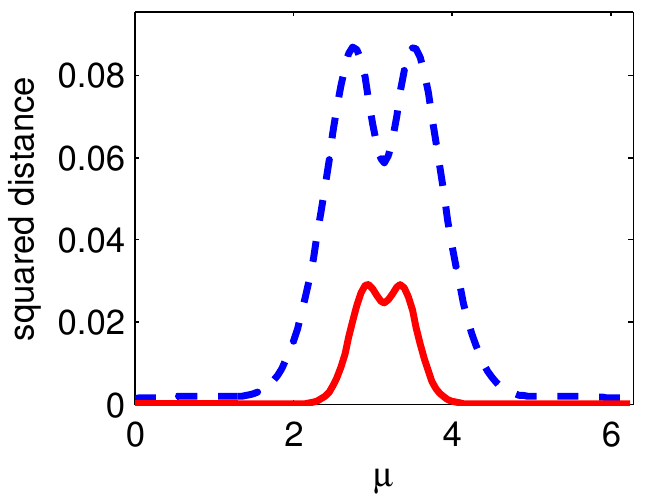} \\
	\rotatebox{90}{\qquad $\sigma_2 = 1.0$} & \includegraphics[width=\kldwidth]{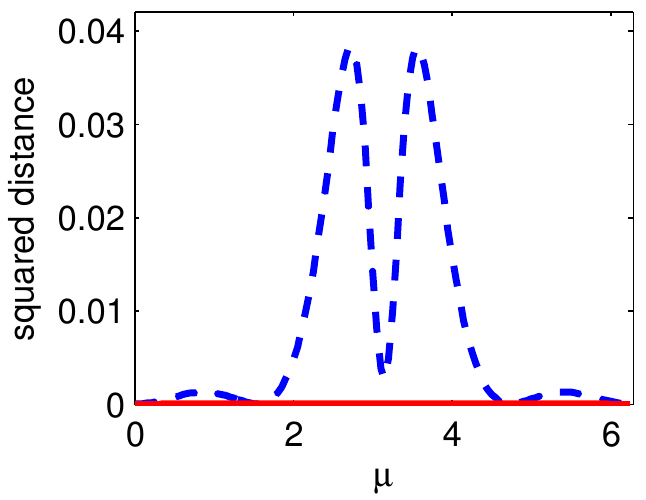} &  \includegraphics[width=\kldwidth]{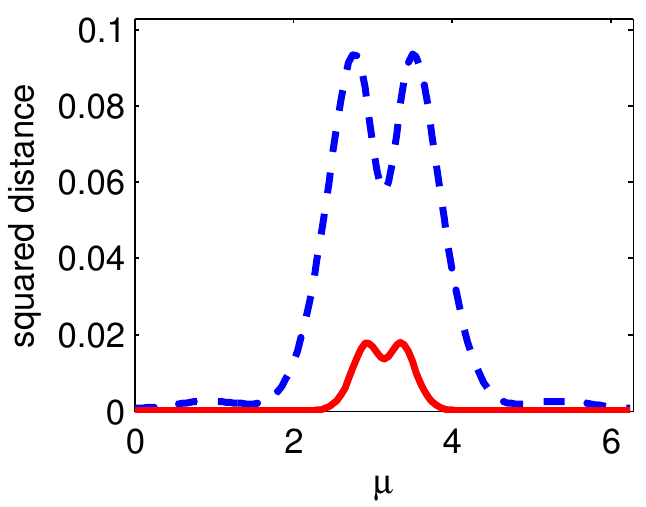} &  
	\includegraphics[width=\kldwidth]{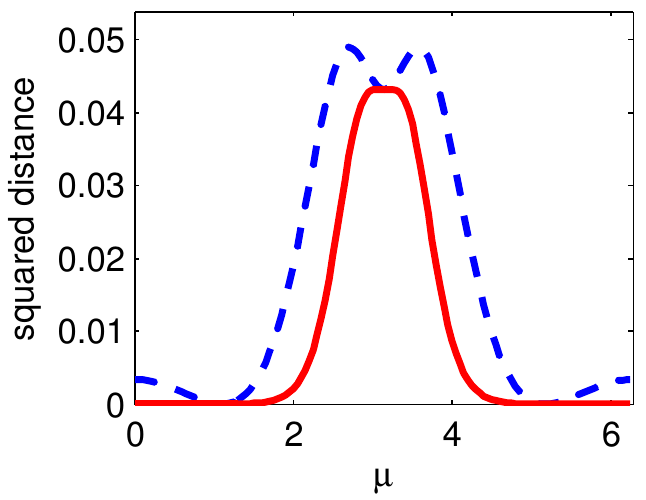} \\
	\end{tabular}
	\caption{$L^2$ distance between the true product of WN densities and the proposed approximations.}
	\label{fig:wnmull2}
\end{figure*}

\subsection{Filtering}

\begin{table}[htb]
	\centering
	\begin{tabular}{lcr}
	\toprule
	\textbf{Scenario} & \textbf{System Function} & \textbf{Measurement Noise $\mat{C}^v$} \\
	\midrule
	s & (\ref{eq:additive}) & $0.01 \cdot \matI_{2 \times 2}$ \\
	m & (\ref{eq:additive}) & $0.1 \cdot \matI_{2 \times 2}$ \\
	l & (\ref{eq:additive}) & $3\cdot \matI_{2 \times 2}$ \\
	s-non-additive & (\ref{eq:nonadditive}) & $0.01 \cdot \matI_{2 \times 2}$ \\
	m-non-additive & (\ref{eq:nonadditive}) & $0.1 \cdot \matI_{2 \times 2}$ \\
	l-non-additive & (\ref{eq:nonadditive}) & $3\cdot \matI_{2 \times 2}$ \\
	\bottomrule
	\end{tabular}
	\caption{Evaluation scenarios.}
	\label{table:scenarios}
\end{table}

In order to evaluate the proposed filtering algorithms, we simulated several scenarios. First of all, we distinguish between models with additive and with a more complex noise structure. In the case of additive noise, we consider the system function 
\begin{align}
x_{k+1} = x_k + c_1 \times_\mathbb{R} \sin(x_k) + c_2 + w_k \ ,
\label{eq:additive}
\end{align}
with two parameters $c_1 = 0.1, c_2=0.15$, noise $w_k \sim\WN(0,0.2)$, and $\times_\mathbb{R}$ is multiplication in the field of real numbers $\mathbb{R}$. Intuitively, $c_1$ determines the degree of nonlinearity and $c_2$ is a constant angular velocity that is added at each time step. For the case of arbitrary noise, the system function is given by
\begin{align}
x_{k+1} = x_k + c_1 \times_\mathbb{R} \sin(x_k + w_k) + c_2 \ ,
\label{eq:nonadditive}
\end{align}
with the same $c_1, c_2,$ and $w_k$ as above. In both cases, the nonlinear measurement function is given by 
\begin{align*} 
\hat{\vecz}_k = [\cos(x_k), \sin(x_k)]^T + \vecv_k \quad \in \mathbb{R}^2
\end{align*}
with additive noise $\vecv_k \sim \mathcal{N} (\vec{0}, \eta \cdot \matI_{2\times 2})$, $\eta \in \{ 3, 0.1, 0.01 \}$. An overview of all considered scenarios is given in Table~\ref{table:scenarios}.

In the scenarios with additive system noise, we compare the proposed filter to all standard approaches described in  Sec.~\ref{sec:problemformulation:standardapproaches}, a UKF with 1D state vector, a UKF with 2D state vector and particle filters with 10 and 100 particles. In order to handle non-additive noise with a UKF, typically state augmentation is used, which is not applicable to arbitrary noise. For this reason, we only compare the proposed approach to the particle filters in the non-additive noise case.

The initial estimate is given by $x_0 \sim \WN(0, 1)$, whereas the true initial state is on the opposite side of the circle $x_0^\text{true} = \pi$, i.e., the initial estimate is poor, which is difficult to handle for noncircular filters.

For the circular filtering algorithm, we use the deterministic sampling method given in Algorithm~\ref{algo:5diracs} with parameter $\lambda = 0.5$. The progression threshold is chosen as $R=0.2$.

In order to evaluate the performance of different filters, we consider a specific error measure that takes periodicity into account. The angular error is defined as the shortest distance on the circle
\begin{align*}
d: \Sone \times \Sone \to [0,\pi] \ , \ d(a,b) = \min (|a-b| ,2\pi - |a-b|) \ .
\end{align*}
This leads to an angular version of the commonly used root mean square error (RMSE)
\begin{align*}
\sqrt{\frac{1}{k_\text{max}} \sum_{k=1}^{k_\text{max}} d(x_k, x_k^\text{true})^2 }
\end{align*}
between estimates $x_k$ and true state variables $x_k^\text{true}$. We simulated the system for $k_\text{max}=100$ time steps and compared the angular RMSE of all estimators. The results from 100 Monte Carlo runs are depicted in Fig.~\ref{fig:filteringrmse}.

In the scenarios with additive noise, it can be seen that the proposed filter performs very well regardless of the amount of noise. Only the particle filter with 100 particles is able to produce similar results. However, it should be noted that the proposed filter uses just five samples. The particle filter with 10 particles performs a lot worse and fails completely for small noise as a result of particle degeneration issues. Both variants of the UKF perform worse than the proposed filter. Particularly the UKF with two-dimensional state does not work very well, which can be explained by the inaccuracies in the conversion of a one-dimensional into a two-dimensional noise noise.

When non-additive noise is considered, the proposed filter even significantly outperforms the particle filter with 100 particles. As a result of the low number of particles and the associated issues regarding particle degeneration, the particle filter with 10 particles has the worst performance.

\begin{figure*}
	\centering
	\begin{tabular}{ccc}
	small noise $\eta=0.01$ & medium noise $\eta = 0.1$ & large noise $\eta = 3$ \\
	\includegraphics[width=5cm]{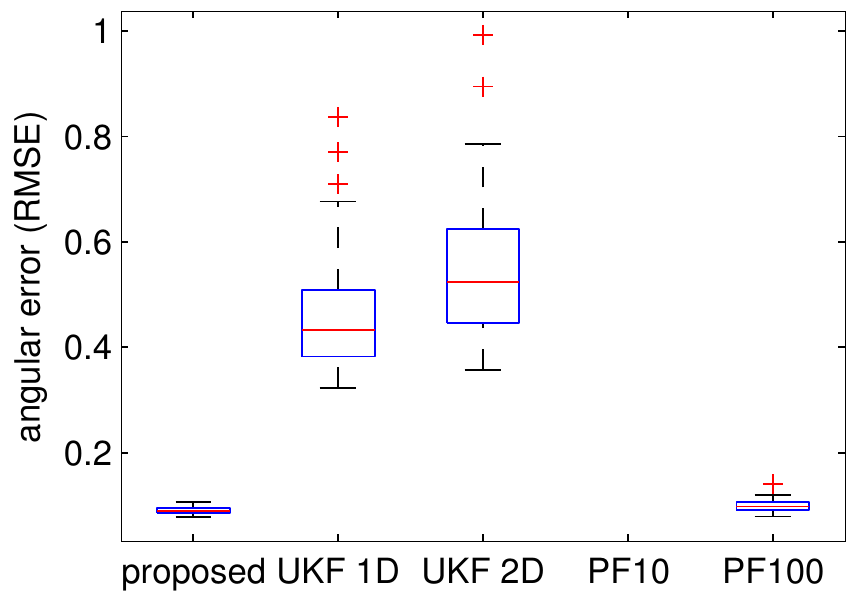} &
	\includegraphics[width=5cm]{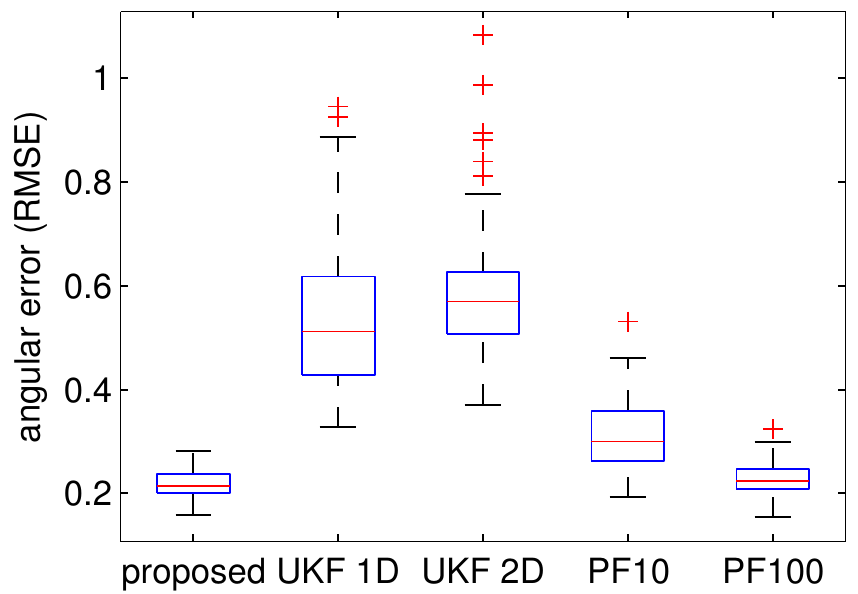} &
	\includegraphics[width=5cm]{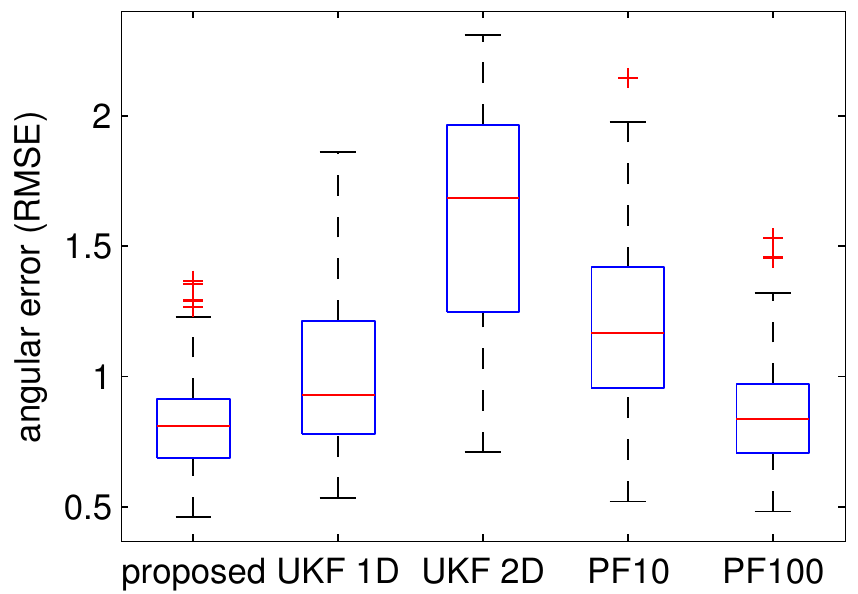} \\
	\includegraphics[width=5cm]{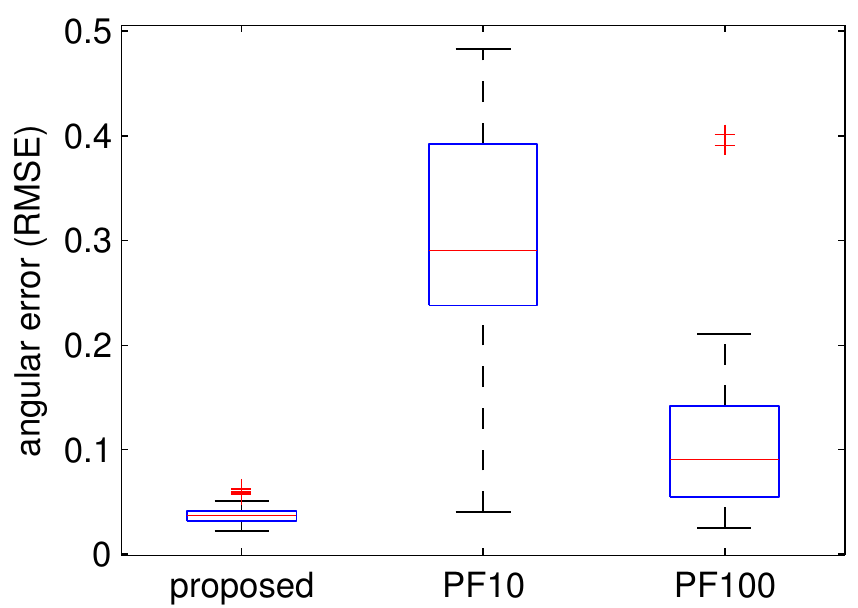} &
	\includegraphics[width=5cm]{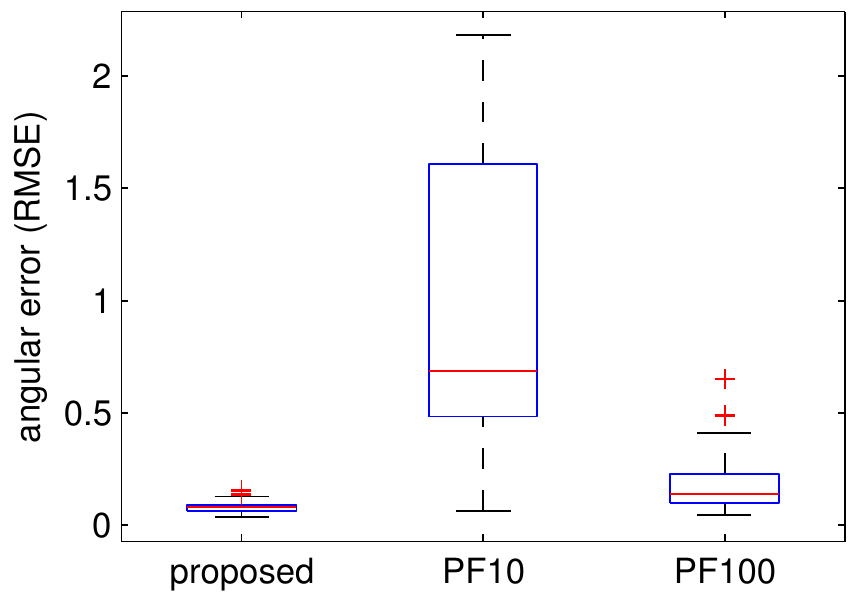} &
	\includegraphics[width=5cm]{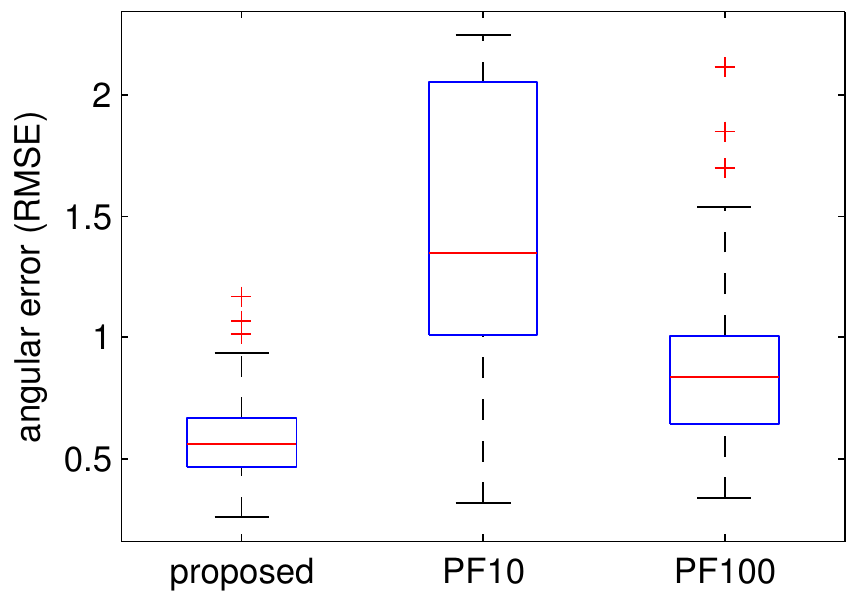} \\
	\end{tabular}
	\caption{RMSE (in radians) for different filters obtained from 100 Monte Carlo runs for additive noise (top) and non-additive noise (bottom).}
	\label{fig:filteringrmse}
\end{figure*}

\section{Conclusion}
\label{sec:conclusion}
In this paper, we presented a framework for recursive filtering on the circle. The proposed filtering algorithms can deal with arbitrary nonlinear system and measurement functions. Furthermore, they can be used in conjunction with different circular probability distributions. We have shown that the prediction step can be performed based on circular moments only, without ever assuming a particular distribution.

For the purpose of evaluation, we have considered several aspects of the proposed methods. First of all, the accuracy of deterministic approximations was evaluated by considering the error when using them to propagate a continous distribution through a nonlinear function. We have found that the proposed deterministic approximation with five samples yields good results for most practical scenarios. Second, we evaluated the novel moment-based WN multiplication method and show that it is superior to the previously published method based on fitting a VM distribution. Finally, we evaluated the proposed filtering algorithms in several scenarios and compared it to standard approaches. These simulations show the advantages of using a circular filtering scheme compared to traditional methods intended for the linear case.

Future work may include extensions of the proposed methods to other manifolds such as the torus or the hypersphere. Additionally, consideration of multimodal circular distributions may be of interest, for example by means of WN or VM mixtures.

\section*{Acknowledgment}
This work was partially supported by grants from the German Research Foundation
(DFG) within the Research Training Groups RTG 1126 ``Soft-tissue Surgery: New
Computer-based Methods for the Future Workplace'' and RTG 1194 ``Self-organizing
Sensor-Actuator-Networks''.

\bibliographystyle{IEEEtran_Capitalize}
\bibliography{../BibTeX/ISASPreprints,../BibTeX/ISASPublikationen,../BibTeX/ISASPublikationen_laufend,../../../Literatur/gk,unpublished}

\begin{thebibliography}{10}
\providecommand{\url}[1]{#1}
\csname url@samestyle\endcsname
\providecommand{\newblock}{\relax}
\providecommand{\bibinfo}[2]{#2}
\providecommand{\BIBentrySTDinterwordspacing}{\spaceskip=0pt\relax}
\providecommand{\BIBentryALTinterwordstretchfactor}{4}
\providecommand{\BIBentryALTinterwordspacing}{\spaceskip=\fontdimen2\font plus
\BIBentryALTinterwordstretchfactor\fontdimen3\font minus
  \fontdimen4\font\relax}
\providecommand{\BIBforeignlanguage}[2]{{%
\expandafter\ifx\csname l@#1\endcsname\relax
\typeout{** WARNING: IEEEtran.bst: No hyphenation pattern has been}%
\typeout{** loaded for the language `#1'. Using the pattern for}%
\typeout{** the default language instead.}%
\else
\language=\csname l@#1\endcsname
\fi
#2}}
\providecommand{\BIBdecl}{\relax}
\BIBdecl

\bibitem{kalman1960}
R.~E. Kalman, ``A New Approach to Linear Filtering and Prediction Problems,''
  \emph{Transactions of the ASME Journal of Basic Engineering}, vol.~82, pp.
  35--45, 1960.

\bibitem{julier2004}
S.~J. Julier and J.~K. Uhlmann, ``Unscented Filtering and Nonlinear
  Estimation,'' \emph{Proceedings of the IEEE}, vol.~92, no.~3, pp. 401--422,
  Mar. 2004.

\bibitem{jammalamadaka2001}
S.~R. Jammalamadaka and A.~Sengupta, \emph{Topics in Circular
  Statistics}.\hskip 1em plus 0.5em minus 0.4em\relax World Scientific, 2001.

\bibitem{batschelet1981}
E.~Batschelet, \emph{Circular Statistics in Biology}, ser. Mathematics in
  Biology.\hskip 1em plus 0.5em minus 0.4em\relax London: Academic Press, 1981.

\bibitem{mardia1999}
K.~V. Mardia and P.~E. Jupp, \emph{Directional Statistics}, 1st~ed.\hskip 1em
  plus 0.5em minus 0.4em\relax Wiley, 1999.

\bibitem{mardia2008}
K.~V. Mardia, G.~Hughes, C.~C. Taylor, and H.~Singh, ``A Multivariate von
  {M}ises Distribution with Applications to Bioinformatics,'' \emph{Canadian
  Journal of Statistics}, vol.~36, no.~1, pp. 99--109, 2008.

\bibitem{fisher1987}
N.~I. Fisher, ``Problems with the Current Definitions of the Standard Deviation
  of Wind Direction,'' \emph{Journal of Climate and Applied Meteorology},
  vol.~26, no.~11, pp. 1522--1529, 1987.

\bibitem{mardia1981}
K.~V. Mardia, ``Directional Statistics in Geosciences,'' \emph{Communications
  in Statistics - Theory and Methods}, vol.~10, no.~15, pp. 1523--1543, 1981.

\bibitem{azmani2009}
M.~Azmani, S.~Reboul, J.-B. Choquel, and M.~Benjelloun, ``A Recursive Fusion
  Filter for Angular Data,'' in \emph{IEEE International Conference on Robotics
  and Biomimetics (ROBIO 2009)}, 2009, pp. 882--887.

\bibitem{stienne2013}
G.~Stienne, S.~Reboul, M.~Azmani, J.~Choquel, and M.~Benjelloun, ``A
  Multi-temporal Multi-sensor Circular Fusion Filter,'' \emph{Information
  Fusion}, vol.~18, pp. 86--100, Jul. 2013.

\bibitem{stienne2014}
G.~Stienne, S.~Reboul, J.~Choquel, and M.~Benjelloun, ``Cycle Slip Detection
  And Repair with a Circular On-line Change-Point Detector,'' \emph{Signal
  Processing}, 2014.

\bibitem{stienne2012}
------, ``Circular Data Processing Tools Applied to a Phase Open Loop
  Architecture for Multi-Channels Signals Tracking,'' in \emph{IEEE/ION
  Position Location and Navigation Symposium (PLANS 2012)}, 2012, pp. 633--642.

\bibitem{mokhtari2013}
K.~E. Mokhtari, S.~Reboul, M.~Azmani, J.-B. Choquel, H.~Salaheddine, B.~Amami,
  and M.~Benjelloun, ``A Circular Interacting Multi-Model Filter Applied to Map
  Matching,'' in \emph{Proceedings of the 16th International Conference on
  Information Fusion (Fusion 2013)}, 2013.

\bibitem{markovic2014icra}
I.~Markovic, F.~Chaumette, and I.~Petrovic,
  ``\BIBforeignlanguage{Anglais}{Moving Object Detection, Tracking and
  Following Using an Omnidirectional Camera on a Mobile Robot},'' in
  \emph{\BIBforeignlanguage{Anglais}{Proceedings of the 2014 IEEE International
  Conference on Robotics and Automation (ICRA 2014)}}, Hong-Kong, Jun. 2014.

\bibitem{ACC13_Kurz}
G.~Kurz, I.~Gilitschenski, and U.~D. Hanebeck, ``{Recursive Nonlinear Filtering
  for Angular Data Based on Circular Distributions},'' in \emph{Proceedings of
  the 2013 American Control Conference (ACC 2013)}, Washington D. C., USA, Jun.
  2013.

\bibitem{ACC14_Kurz}
------, ``{N}onlinear {M}easurement {U}pdate for {E}stimation of {A}ngular
  {S}ystems {B}ased on {C}ircular {D}istributions,'' in \emph{Proceedings of
  the 2014 American Control Conference (ACC 2014)}, Portland, Oregon, USA, Jun.
  2014.

\bibitem{Fusion14_KurzGilitschenski}
------, ``{Deterministic Approximation of Circular Densities with Symmetric
  Dirac Mixtures Based on Two Circular Moments},'' in \emph{Proceedings of the
  17th International Conference on Information Fusion (Fusion 2014)},
  Salamanca, Spain, Jul. 2014.

\bibitem{IPIN13_Kurz}
G.~Kurz, F.~Faion, and U.~D. Hanebeck, ``{Constrained Object Tracking on
  Compact One-dimensional Manifolds Based on Directional Statistics},'' in
  \emph{Proceedings of the Fourth IEEE GRSS International Conference on Indoor
  Positioning and Indoor Navigation (IPIN 2013)}, Montbeliard, France, Oct.
  2013.

\bibitem{Fusion13_Gilitschenski}
I.~Gilitschenski, G.~Kurz, and U.~D. Hanebeck, ``{Bearings-Only Sensor
  Scheduling Using Circular Statistics},'' in \emph{Proceedings of the 16th
  International Conference on Information Fusion (Fusion 2013)}, Istanbul,
  Turkey, Jul. 2013.

\bibitem{ACC15_Kurz}
G.~Kurz, M.~Dolgov, and U.~D. Hanebeck, ``{Nonlinear Stochastic Model
  Predictive Control in the Circular Domain (to appear)},'' in
  \emph{Proceedings of the 2015 American Control Conference (ACC 2015)},
  Chicago, Illinois, USA, Jul. 2015.

\bibitem{Fusion13_Kurz-Bingham}
G.~Kurz, I.~Gilitschenski, S.~J. Julier, and U.~D. Hanebeck, ``{Recursive
  Estimation of Orientation Based on the Bingham Distribution},'' in
  \emph{Proceedings of the 16th International Conference on Information Fusion
  (Fusion 2013)}, Istanbul, Turkey, Jul. 2013.

\bibitem{julier2007}
S.~J. Julier and J.~J. LaViola, ``On {K}alman Filtering With Nonlinear Equality
  Constraints,'' \emph{IEEE Transactions on Signal Processing}, vol.~55, no.~6,
  pp. 2774--2784, 2007.

\bibitem{arulampalam2002}
M.~Arulampalam, S.~Maskell, N.~Gordon, and T.~Clapp, ``A Tutorial on Particle
  Filters for Online Nonlinear/Non-Gaussian Bayesian Tracking,'' \emph{IEEE
  Transactions on Signal Processing}, vol.~50, no.~2, pp. 174--188, 2002.

\bibitem{hastings1970}
W.~K. Hastings, ``Monte Carlo Sampling Methods Using Markov Chains and Their
  Applications,'' \emph{Biometrika}, vol.~57, no.~1, pp. 97--109, 1970.

\bibitem{SDF14_Kurz}
G.~Kurz, I.~Gilitschenski, and U.~D. Hanebeck, ``{Efficient Evaluation of the
  Probability Density Function of a Wrapped Normal Distribution},'' in
  \emph{{Proceedings of the IEEE ISIF Workshop on Sensor Data Fusion: Trends,
  Solutions, Applications (SDF 2014)}}, Bonn, Germany, Oct. 2014.

\bibitem{abramowitz1972}
M.~Abramowitz and I.~A. Stegun, \emph{Handbook of Mathematical Functions with
  Formulas, Graphs, and Mathematical Tables}, 10th~ed.\hskip 1em plus 0.5em
  minus 0.4em\relax New York: Dover, 1972.

\bibitem{mardia1975characterizations}
K.~Mardia, ``\BIBforeignlanguage{English}{Characterizations of Directional
  Distributions},'' in \emph{\BIBforeignlanguage{English}{A Modern Course on
  Statistical Distributions in Scientific Work}}.\hskip 1em plus 0.5em minus
  0.4em\relax Springer Netherlands, 1975, vol.~17, pp. 365--385.

\bibitem{amos1974}
D.~E. Amos, ``Computation of Modified Bessel Functions and Their Ratios,''
  \emph{Mathematics of Computation}, vol.~28, no. 125, pp. 239--251, 1974.

\bibitem{sra2012}
S.~Sra, ``\BIBforeignlanguage{English}{A Short Note on Parameter Approximation
  for von Mises--Fisher Distributions: And a Fast Implementation of Is (x)},''
  \emph{\BIBforeignlanguage{English}{Computational Statistics}}, vol.~27,
  no.~1, pp. 177--190, 2012.

\bibitem{stienne2013thesis}
G.~Stienne, ``Traitements des signaux circulaires appliqu\'es \`a
  l'altim\'etrie par la phase des signaux GNSS,'' Ph.D. dissertation,
  Universit\'e du Littoral C\^ote d'Opale, Dec. 2013.

\bibitem{kotecha2003}
J.~H. Kotecha and P.~Djuric, ``Gaussian Particle Filtering,'' \emph{IEEE
  Transactions on Signal Processing}, vol.~51, no.~10, pp. 2592--2601, 2003.

\bibitem{MFI08_Hanebeck-LCD}
U.~D. Hanebeck and V.~Klumpp, ``{Localized Cumulative Distributions and a
  Multivariate Generalization of the Cram\'{e}r-von Mises Distance},'' in
  \emph{Proceedings of the 2008 IEEE International Conference on Multisensor
  Fusion and Integration for Intelligent Systems (MFI 2008)}, Seoul, Republic
  of Korea, Aug. 2008, pp. 33--39.

\bibitem{Fusion13_Steinbring}
J.~Steinbring and U.~D. Hanebeck, ``{S2KF: The Smart Sampling Kalman Filter},''
  in \emph{Proceedings of the 16th International Conference on Information
  Fusion (Fusion 2013)}, Istanbul, Turkey, Jul. 2013.

\bibitem{IFAC14_Hanebeck}
U.~D. Hanebeck and A.~Lindquist, ``{Moment-based Dirac Mixture Approximation of
  Circular Densities},'' in \emph{Proceedings of the 19th IFAC World Congress
  (IFAC 2014)}, Cape Town, South Africa, Aug. 2014.

\bibitem{unpublished_Quantization}
I.~Gilitschenski, G.~Kurz, and U.~D. Hanebeck, ``Optimal Quantization of
  Circular Distributions (submitted),'' in \emph{Proceedings of the 53rd IEEE
  Conference on Decision and Control (CDC 2014)}, 2014.

\bibitem{petersen2012}
K.~B. Petersen and M.~S. Pedersen, ``The Matrix Cookbook,'' Nov. 2012.

\bibitem{johnson2012}
S.~G. Johnson, ``Faddeeva Package,''
  \url{http://ab-initio.mit.edu/wiki/index.php/Faddeeva\_Package}, Dec. 2012.

\bibitem{goldberg1991}
D.~Goldberg, ``What Every Computer Scientist Should Know About Floating-point
  Arithmetic,'' \emph{ACM Computing Surveys}, vol.~23, no.~1, pp. 5--48, Mar.
  1991.

\bibitem{Fusion13_Hanebeck}
U.~D. Hanebeck, ``{PGF 42: Progressive Gaussian Filtering with a Twist},'' in
  \emph{Proceedings of the 16th International Conference on Information Fusion
  (Fusion 2013)}, Istanbul, Turkey, Jul. 2013.

\end{thebibliography}

\end{document}